\newcommand{\deff}{\mbox{$\stackrel{\rm def}{=}$}}
\newcommand{\field}[1]{\mathbb{#1}}
\newcommand{\Z}{\field{Z}}
\newcommand{\R}{\field{R}}
\newcommand{\cE}{{\cal E}}
\newcommand{\cH}{{\cal H}}
\newcommand{\cA}{{\cal A}}
\newcommand{\cC}{{\cal C}}
\newcommand{\cG}{{\cal G}}
\newcommand{\cL}{{\cal L}}
\newcommand{\cS}{{\cal S}}
\newcommand{\cP}{{\cal P}}
\newcommand{\sP}{\cP}
\newcommand{\sG}{\cG}
\newcommand{\Gr}{\smash{{\sG\kern-1.5pt}_q\kern-0.5pt(n,k)}}
\newcommand{\Gfourk}{\smash{{\sG\kern-1.5pt}_q\kern-0.5pt(4k,2k)}}
\newcommand{\Gk}{\smash{{\sG\kern-1.5pt}_q\kern-0.5pt(n,k_1)}}
\newcommand{\Gkk}{\smash{{\sG\kern-1.5pt}_q\kern-0.5pt(n,k_2)}}
\newcommand{\Grtwo}{\smash{{\sG\kern-1.5pt}_2\kern-0.5pt(n,k)}}
\newcommand{\Gkone}{\smash{{\sG\kern-1.5pt}_q\kern-0.5pt(n,k_1)}}
\newcommand{\Gktwo}{\smash{{\sG\kern-1.5pt}_q\kern-0.5pt(n,k_2)}}
\newcommand{\Ps}{\smash{{\sP\kern-2.0pt}_q\kern-0.5pt(n)}}
\newtheorem{theorem}{Theorem}
\newtheorem{cor}{Corollary}
\newtheorem{lemma}[theorem]{Lemma}
\newtheorem{remark}{Remark}
\newcommand{\ep}{\varepsilon}
\begin{document}

\title{Tilings with $n$-Dimensional Chairs\\ and their Applications to Asymmetric Codes}
\author{Sarit Buzaglo and Tuvi
Etzion,~\IEEEmembership{Fellow,~IEEE}
\thanks{S. Buzaglo is with the Department of Computer Science,
Technion --- Israel Institute of Technology, Haifa 32000, Israel.
(email: sarahb@cs.technion.ac.il). This work is part of her Ph.D.
thesis performed at the Technion.}
\thanks{T. Etzion is with the Department of Computer Science,
Technion --- Israel Institute of Technology, Haifa 32000, Israel.
(email: etzion@cs.technion.ac.il).}
\thanks{This work was supported in part by the Israel
Science Foundation (ISF), Jerusalem, Israel, under Grant 230/08.}
}

\maketitle
\begin{abstract}
An $n$-dimensional chair consists of an $n$-dimensional box
from which a smaller $n$-dimensional box is removed. A tiling
of an $n$-dimensional chair has two nice applications
in some memories using asymmetric codes. The first one is in the design of
codes which correct asymmetric errors with limited-magnitude. The
second one is in the design of $n$~cells $q$-ary write-once memory
codes. We show an equivalence between the design of a tiling
with an integer lattice and the design of a tiling from a generalization
of splitting (or of Sidon sequences).
A tiling of an $n$-dimensional chair can define a perfect code for
correcting asymmetric errors with limited-magnitude.
We present constructions for such tilings and prove cases where
perfect codes for these type of errors do not exist.
\end{abstract}

\begin{keywords}
Asymmetric limited-magnitude errors, lattice, $n\text{-}$dimensional chair,
perfect codes, splitting, tiling, WOM codes
\end{keywords}

\section{Introduction}
\label{sec:introduction}

Storage media which are constrained to change of values
in any location of information only in one direction
were constructed throughout the last fifty years. From the
older punch cards to later optical disks and modern storage
such as flash memories, there was a need to design coding which enables
the values of information to be increased but not to be decreased.
These kind of storage medias are asymmetric memories.
We will call the codes used in these medias, asymmetric codes.
Some of these memories behave as write-once memories
(or WOMs in short)
and coding for them was first considered in the seminal work of Rivest
and Shamir~\cite{RiSh82}. This work initiated a sequence of papers
on this topic, e.g.~\cite{CGM86,FiSh84,FuVi99,WWZK,ZeCo91}.

The emerging new storage media of flash memory raised many new
interesting problems. Flash memory is a nonvolatile reliable memory with
high storage density. Its relatively low cost makes it the ideal
memory to replace the magnetic recording technology in storage media.
A multilevel flash cell is electronically programmed into $q$
threshold levels which can be viewed as elements of the set $\{
0,1, \ldots , q-1 \}$. Raising the charge level of a cell is an easy
operation, but reducing the charge level of a single cell
requires to erase the whole block to which the cell belongs.
This makes the reducing of a charge level to be a complicated, slow, and
unwanted operation. Hence, the cells of the flash memory act as
an asymmetric memory as long as blocks are not erased. This has motivated
new research work on WOMs, e.g.~\cite{CaYa12,Shp12,WQYKS,YaSh12,YSVW12}.

Moreover, usually in programming of the cells,
we let the charge level in a single cell
of a flash memory only to be raised, and hence the
errors in a single cell will be asymmetric.
Asymmetric error-correcting codes were subject to extensive
research due to their applications in coding for computer
memories~\cite{RaFu89}. The errors in a cell of a flash memory are
a new type of asymmetric errors which have limited-magnitude.
Errors in this model are in one
direction and are not likely to exceed a certain limit. This means
that a cell in level $i$ can be raised by an error to level~$j$,
such that $i < j \leq q-1$ and $j-i \leq \ell \leq q-1$,
where $\ell$ is the error limited-magnitude.
Asymmetric error-correcting codes with limited-magnitude were
proposed in~\cite{AAK02} and were first considered for nonvolatile
memories in~\cite{CSBB07,CSBB}. Recently, several other papers
have considered the problem, e.g.~\cite{Dol10,ElBo10,KBE11,YSVW11}.

In this work we will consider a solution for both the construction problem of asymmetric codes
with limited-magnitude and the coding problem
in WOMs. Our proposed solution will use an older
concept in combinatorics named tiling. Given an $n\text{-}$dimensional shape $\cS \subset \Z^n$,
a tiling of $\Z^n$ with $\cS$ consists of disjoint copies of $\cS$
such that each point of $\Z^n$ is covered by exactly one copy of $\cS$.
Tiling is a well established concept in combinatorics and especially
in combinatorial geometry. There are many algebraic methods
related to tiling~\cite{StSz94} and it is an important topic
also in coding theory. For example, perfect codes are
associated with tilings, where the related sphere is the $n$-dimensional
shape $\cS$. Tiling is done with a shape $\cS$ and we consider only shapes which form
an error sphere for asymmetric limited-magnitude codes or their immediate
generalization in $\R^n$. The definition of a tiling in $\R^n$ will be
given in Section~\ref{sec:basic}.

Two of the most considered shapes for tiling
are the cross and the semi-cross~\cite{Ste86,StSz94}. These were also
considered in connections to flash memories~\cite{Sch11}.
In this paper we will consider another shape which
will be called in the \text{sequel} an $n\text{-}$dimensional chair.
An $n$-dimensional chair is an $n\text{-}$dimensional box
from which a smaller $n$-dimensional box is removed
from one of its corners. This is a generalization of the
original concept which is an $n\text{-}$cube from which one vertex
was removed~\cite{LeMo01}. In other places this shape is called
a notched cube~\cite{Kol98,Sch94,Ste90}. Lattice tiling with this shape
will be discussed, regardless of the length of each
side of the larger box and the length of each side of the smaller box.
We will show an equivalent way to present a lattice tiling,
this method will be called a generalized splitting and it generalizes
the concepts of splitting defined in~\cite{Stein67};
and the concept of $B_h[\ell]$ sequences defined and
used for construction of codes correcting asymmetric errors with
limited-magnitude in~\cite{KBE11}.
We will show two applications of tilings with such a shape. One application
is for construction of codes which correct
up to $n-1$ asymmetric limited-magnitude errors
with any given magnitude for each cell; and a second application
is for constructing WOM codes with multiple writing.

In the first part
of this work we will consider only tilings with $n$-dimensional chairs.
In the second part of this work we will consider the
applications of tilings with $n\text{-}$dimensional chairs.
The rest of this work is organized as follows. In
Section~\ref{sec:basic} we define the basic concepts for
our presentation of tilings with $n\text{-}$dimensional chairs.
We define the concepts of an $n\text{-}$dimensional chair and a tiling of the space with
a given shape. We present the $n\text{-}$dimensional chair
as a shape in $\R^n$. When the $n$-dimensional chair consists of unit
cubes connected only by unit cubes of smaller dimensions, the $n\text{-}$dimensional
chair can be represented as a shape in $\Z^n$. For such a shape
we will seek for an integer tiling. We will be interested in this
paper only in lattice tiling and when the shape is in $\Z^n$ only
in integer lattice tiling. Two representations for tiling with a shape
will be given. The first representation is with a generator matrix for
the lattice tiling and the second is by the concept which is called
a generalized splitting.
We will show that these two representations are equivalent.
In Section~\ref{sec:constructSidon} we will present a construction
for tilings with $n$-dimensional chairs based on generalized splitting.
The construction will be based on properties of some Abelian groups.
In Section~\ref{sec:constructLattice} we will present a construction
of tiling with $n$-dimensional chairs based on lattices. This construction
works on any $n$-dimensional chair, while the construction of Section~\ref{sec:constructSidon}
works only on certain discrete ones. We note that after the paper was
written it was brought to our attention that lattice tilings
for notched cubes were given in~\cite{Kol98,Sch94,Ste90}. For completeness
and since our proof is slightly different we kept this part
in the paper. Tiling with a discrete $n$-dimensional chair
can be viewed as a perfect code for correction of asymmetric errors
with limited-magnitude. In Section~\ref{sec:asymmetric} we present the
definition for such codes, not necessarily perfect. We also present
the necessary definition for such perfect codes. We explain what
kind of perfect codes are derived from our constructions and also
how non-perfect codes can be derived from our constructions.
In Section~\ref{sec:nonexist} we prove that certain perfect codes
for correction of asymmetric errors with limited-magnitude do not
exist. In Section~\ref{sec:application}
we will discuss the application of our construction for
multiple writing in $n$ cells $q$-ary write-once memory.
We conclude in Section~\ref{sec:conclude}.

\section{Basic Concepts}
\label{sec:basic}

An $n$-dimensional chair $\cS_{L,K}\subset\R^n$,
$L=(\ell_1,\ell_2,...,\ell_n)$, $K=(k_1,k_2,...,k_n)\in \R^n$,
$0<k_i<\ell_i$ for each $i$, $1\leq i\leq n$, is an $n$-dimensional
$\ell_1 \times \ell_2 \times ~ \cdots ~ \times \ell_n$ box
from which an $n$-dimensional $k_1 \times k_2 \times ~ \cdots ~ \times k_n$
box was removed from one of its corners. Formally, it is defined by
\begin{align*}
\cS_{L,K}=\{ (x_1,x_2,\ldots,x_n )\in \R^n~: ~0 \leq x_i< \ell_i~,~~~~~~~ \\
\hbox{ and there exists a }j,~ 1 \leq j \leq n, \hbox{ such that}~x_j<\ell_j-k_j\}.
\end{align*}

For a given $n$-dimensional shape $\cS$ let $|\cS|$ denote
the \emph{volume} of $\cS$.
The following lemma on the volume of $\cS_{L,K}$ is an immediate
consequence of the definition.
\begin{lemma}
If $L=(\ell_1,\ell_2,...,\ell_n)$, $K=(k_1,k_2,...,k_n)$ are two vectors in
$\R^n$, where $0 < k_i < \ell_i$ for each $i$, $1 \leq i \leq n$,
then
$$|\cS_{L,K}|= \prod_{i=1}^n \ell_i -\prod_{i=1}^n k_i ~.$$
\end{lemma}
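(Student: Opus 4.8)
The plan is to compute the volume directly from the set-theoretic definition of $\cS_{L,K}$ by viewing it as a difference of two boxes. First I would observe that the condition ``$0 \le x_i < \ell_i$ for all $i$'' describes the large box $B_L = [0,\ell_1) \times \cdots \times [0,\ell_n)$, whose volume is $\prod_{i=1}^n \ell_i$. The extra requirement --- that there exists some $j$ with $x_j < \ell_j - k_j$ --- is exactly the negation of the statement ``$x_i \ge \ell_i - k_i$ for every $i$''. Hence a point of $B_L$ fails to lie in $\cS_{L,K}$ precisely when $\ell_i - k_i \le x_i < \ell_i$ for all $i$, i.e. when it lies in the translated small box $B' = \prod_{i=1}^n [\ell_i - k_i, \ell_i)$, which sits in the ``far'' corner of $B_L$. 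Since $0 < k_i < \ell_i$, this box $B'$ is nonempty and entirely contained in $B_L$.

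Thus $\cS_{L,K} = B_L \setminus B'$ with $B' \subseteq B_L$, and by finite additivity of Lebesgue measure (volume) on disjoint sets,
\[
|\cS_{L,K}| = |B_L| - |B'| = \prod_{i=1}^n \ell_i - \prod_{i=1}^n k_i,
\]
using that $|B'| = \prod_{i=1}^n \big(\ell_i - (\ell_i - k_i)\big) = \prod_{i=1}^n k_i$. This is the claimed identity.

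There is essentially no obstacle here: the only point requiring a moment's care is the logical manipulation turning ``there exists $j$ with $x_j < \ell_j - k_j$'' into the complement (within $B_L$) of a product of intervals --- a routine application of De Morgan's law --- and the observation that $B' \subseteq B_L$ so that the volumes genuinely subtract. If one prefers to avoid invoking measure theory, the same computation goes through by an elementary inclusion--exclusion / Riemann-integral argument, or, in the discrete case where all $\ell_i, k_i$ are integers, simply by counting lattice points in the two boxes.
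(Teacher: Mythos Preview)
Your argument is correct and is exactly the ``immediate consequence of the definition'' the paper alludes to: the chair is the big box $[0,\ell_1)\times\cdots\times[0,\ell_n)$ with the corner box $\prod_i[\ell_i-k_i,\ell_i)$ removed, and the volumes subtract. The paper gives no further proof, so there is nothing to compare.
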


\vspace{0.2cm}

If $L=(\ell_1,\ell_2,...,\ell_n),~K=(k_1,k_2,...,k_n)\in \Z^n$
then the $n\text{-}$dimensional chair $\cS_{L,K}$ is
a discrete shape and it can be viewed as a collection of connected
$n$-dimensional unit cubes in which any two adjacent cubes share a complete
{$(n-1)\text{-}$}dimensional unit cube. In this case
the formal definition of the $n$-dimensional
chair, which considers only points of $\Z^n$, is
\begin{align*}
\cS_{L,K}=\{ (x_1,x_2,\ldots,x_n )\in \Z^n~: ~0 \leq x_i< \ell_i~,~~~~~~~ \\
\hbox{ and there exists a }j,~ 1 \leq j \leq n, \hbox{ such that}~x_j<\ell_j-k_j\}.
\end{align*}
\begin{remark}
It is important to note that if ${L=(\ell_1,\ell_2,...,\ell_n),~K=(k_1,k_2,...,k_n)\in \R^n}$
are two integer vectors then the two definitions coincide only
if $\cS_{L,K}$ is viewed as a collection of $n$-dimensional unit cubes.
Special consideration, in the definition, should be given to the boundaries
of the cubes, but this is not an issue for the current work.
\end{remark}

For  $n=2$, if $\ell_1=\ell_2 =\ell$ and $k_1 =k_2 = \ell-1$, then
the chair coincides with the shape known as a corner
(or a semi-cross)~\cite{Stein84}. Examples of a two-dimensional
semi-cross and a three-dimensional chair are given in
Figure~\ref{fig:semi_corner}.

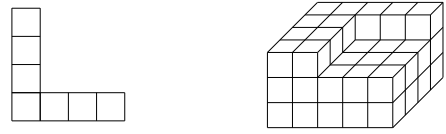
\begin{figure}[tb]

\vspace{2cm}


\centering \setlength{\unitlength}{0.5mm}

\begin{picture}(500,-100)(-10,5)
\resizebox{!}{15mm}{
\linethickness{.4 pt}
\put(50,0){\line(1,0){80}} \put(70,0){\line(0,1){80}}

\put(50,0){\line(0,1){80}} \put(50,20){\line(1,0){80}}

\put(130,0){\line(0,1){20}} \put(50,80){\line(1,0){20}}

\put(110,0){\line(0,1){20}} \put(50,60){\line(1,0){20}}
\put(90,0){\line(0,1){20}} \put(50,40){\line(1,0){20}}
\put(70,0){\line(0,1){20}} \put(50,20){\line(1,0){20}}
}

\end{picture}

\begin{picture}(200,-100)(-80,5)
\resizebox{!}{20mm}{
\linethickness{.4 pt}
\put(70,20){\line(0,1){60}}
\put(50,20){\line(1,0){100}} \put(50,20){\line(0,1){60}}
\put(50,40){\line(1,0){100}} \put(90,20){\line(0,1){60}}
\put(50,60){\line(1,0){100}} \put(110,20){\line(0,1){40}}

\put(50,80){\line(1,0){40}} \put(130,20){\line(0,1){40}}
\put(50,80){\line(1,1){40}} \put(70,80){\line(1,1){40}}
\put(90,80){\line(1,1){40}}
\put(150,20){\line(0,1){40}}
\put(150,20){\line(1,1){40}}
\put(150,40){\line(1,1){40}}
\put(150,60){\line(1,1){40}}
\put(130,60){\line(1,1){30}}
\put(110,60){\line(1,1){30}}
\put(90,120){\line(1,0){100}}
\put(80,110){\line(1,0){100}}
\put(70,100){\line(1,0){40}}
\put(60,90){\line(1,0){40}}
\put(90,60){\line(1,1){30}}
\put(120,90){\line(1,0){60}}
\put(110,80){\line(1,0){60}}
\put(100,70){\line(1,0){60}}
\put(120,90){\line(0,1){20}}
\put(110,80){\line(0,1){20}}
\put(100,70){\line(0,1){20}}
\put(140,90){\line(0,1){20}}
\put(160,90){\line(0,1){20}}
\put(140,110){\line(1,1){10}}

\put(160,110){\line(1,1){10}}
\put(180,110){\line(1,1){10}}
\put(190,60){\line(0,1){60}}
\put(180,50){\line(0,1){60}}
\put(170,40){\line(0,1){40}}
\put(160,30){\line(0,1){40}}

}

\end{picture}

\caption{A semi-cross with $\ell=4$ and a 3-dimensional chair with
         $L=(5,4,3)$ and $K=(3,3,1)$.}

\label{fig:semi_corner}

\end{figure}

A set $P \subseteq \Z^n$ is a \emph{packing} of $\Z^n$ with a
shape $\cS$ if copies of $\cS$ placed on the points of $P$ (in the
same relative position of $\cS$) are disjoint.
A set $T \subseteq \Z^n$ is a \emph{tiling}
of $\Z^n$ with a shape $\cS$ if it is a packing and the disjoint copies of $\cS$
in the packing cover $\Z^n$.

A set $P \subseteq \R^n$ is a \emph{packing} of $\R^n$ with a
shape $\cS$ if copies of $\cS$ placed on the points of $P$ (in the
same relative position of $\cS$) have non-intersecting interiors.
The \emph{closure} of a shape $\cS \subset \R^n$ is
the union of $\cS$ with its exterior surface.
A set $T \subseteq \R^n$ is a \emph{tiling}
of $\R^n$ with a shape $\cS$ if it is a packing and the closure, of
the distinct copies of $\cS$
in the packing, covers $\R^n$.

In the rest of this section we will describe two methods to represent
a packing (tiling) with a shape $\cS$. The first representation is with a lattice. In case that
$\cS$ is a discrete shape we have a
second representation with a splitting sequence.

A \emph{lattice} $\Lambda$ is an additive subgroup of
$\R^n$. We will assume that
\begin{equation*}
\Lambda \deff \{ \lambda_1 V_1 + \lambda_2 V_2 + \cdots + \lambda_n V_n ~:~
\lambda_1, \lambda_2, \cdots, \lambda_n \in \Z \}
\end{equation*}
where $\{ V_1, V_2,\ldots, V_n \}$ is a set of linearly
independent vectors in $\R^n$. The set of vectors $\{ V_1,
V_2,\ldots, V_n\}$ is called \emph{the basis} for $\Lambda$,
and the $n \times n$ matrix
$$
{\bf G} \deff \left[\begin{array}{cccc}
v_{11} & v_{12} & \ldots & v_{1n} \\
v_{21} & v_{22} & \ldots & v_{2n} \\
\vdots & \vdots & \ddots & \vdots\\
v_{n1} & v_{n2} & \ldots & v_{nn} \end{array}\right]
$$
having these vectors as its rows is said to be the \emph{generator
matrix} for $\Lambda$. If $\Lambda \subseteq \Z^n$ then the lattice is called
an \emph{integer lattice}.

The {\it volume} of a lattice $\Lambda$, denoted by $V( \Lambda
)$, is inversely proportional to the number of lattice points per
a unit volume. There is a simple expression for the volume of
$\Lambda$, namely, $V(\Lambda)=| \det {\bf G} |$.

A lattice $\Lambda$ is a \emph{lattice packing
(tiling)} with a shape $\cS$ if the set of points
of $\Lambda$ forms a packing
(tiling) with $\cS$. The following lemma is well known.

\begin{lemma}
\label{lem:vol_lat}
A necessary condition that a lattice $\Lambda$ defines a lattice
packing (tiling) with a shape $\cS$ is that $V(\Lambda) \geq
|\cS|$ ($V(\Lambda)=|\cS|$). A sufficient condition that
a lattice packing $\Lambda$ defines a lattice
tiling with a shape $\cS$ is that $V(\Lambda)=|\cS|$.
\end{lemma}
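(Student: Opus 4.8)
The plan is to work in the quotient torus $\T \deff \R^n/\Lambda$, whose total volume equals $V(\Lambda)$, and to reduce everything to a comparison of volumes there. Let $\pi\colon\R^n\to\T$ be the canonical projection. It is an open map, it is locally volume preserving, and it restricts to a volume preserving bijection from any fundamental parallelepiped of $\Lambda$ onto $\T$; in particular, for any measurable $A\subseteq\R^n$ one has $\mathrm{vol}(\pi(A))\le\mathrm{vol}(A)$, with equality whenever $\pi|_A$ is injective. I will use freely that $\cS$ (an $n$-dimensional chair, or more generally any bounded set whose boundary is a Lebesgue null set) satisfies $|\cS|=\mathrm{vol}(\mathrm{int}\,\cS)=\mathrm{vol}(\overline{\cS})$, and that the collection $\{\overline{\cS}+\lambda : \lambda\in\Lambda\}$ is locally finite, so that its union is closed.

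For the necessary condition I would argue as follows. Suppose first that $\Lambda$ is a lattice packing with $\cS$. By definition the translates $\mathrm{int}\,\cS+\lambda$, $\lambda\in\Lambda$, are pairwise disjoint, which is exactly the statement that $\pi$ is injective on $\mathrm{int}\,\cS$. Hence $\mathrm{vol}(\pi(\mathrm{int}\,\cS))=|\cS|$, and since $\pi(\mathrm{int}\,\cS)\subseteq\T$ this gives $|\cS|\le V(\Lambda)$. If in addition $\Lambda$ is a tiling, then the closures $\overline{\cS}+\lambda$ cover $\R^n$, so $\pi(\overline{\cS})=\T$, whence $V(\Lambda)=\mathrm{vol}(\T)\le\mathrm{vol}(\overline{\cS})=|\cS|$. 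Combining the two inequalities gives $V(\Lambda)=|\cS|$.

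For the sufficient condition, assume $\Lambda$ is a lattice packing with $\cS$ and $V(\Lambda)=|\cS|$. As above, the packing hypothesis gives $\mathrm{vol}(\pi(\mathrm{int}\,\cS))=|\cS|=V(\Lambda)=\mathrm{vol}(\T)$, so $\pi(\mathrm{int}\,\cS)$ has full measure in $\T$. I would then show that $\bigcup_{\lambda\in\Lambda}(\cS+\lambda)=\pi^{-1}(\pi(\cS))$ is dense in $\R^n$: if some nonempty open ball $B$ were disjoint from every $\cS+\lambda$, then the nonempty open set $\pi(B)$ would be disjoint from $\pi(\cS)\supseteq\pi(\mathrm{int}\,\cS)$ and hence would have measure zero, which is impossible. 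Since $\bigcup_\lambda(\overline{\cS}+\lambda)$ is closed and contains the dense set $\bigcup_\lambda(\cS+\lambda)$, it equals $\R^n$, i.e. $\Lambda$ is a tiling with $\cS$.

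I do not expect a real obstacle: the whole conceptual content is the passage to $\T=\R^n/\Lambda$, together with the two observations that a packing lattice projects $\mathrm{int}\,\cS$ injectively into $\T$ while a tiling lattice projects $\overline{\cS}$ onto $\T$. The only points requiring a little care are measure-theoretic bookkeeping — negligibility of $\partial\cS$, the inequality $\mathrm{vol}(\pi(A))\le\mathrm{vol}(A)$, and the fact that a locally finite union of lattice translates of a bounded closed set is closed — all of which are routine for the shapes considered in this paper, which is presumably why the statement is quoted as well known.
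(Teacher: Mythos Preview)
Your argument is correct and is essentially the standard torus/fundamental-domain proof of this classical fact. Note, however, that the paper does not supply a proof of this lemma at all: it is introduced with ``The following lemma is well known'' and left unproved. So there is nothing to compare against; your write-up simply fills in the details the paper deliberately omits, and your closing remark anticipating exactly this situation is on target.
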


In the sequel, let $\mathbf{e}_i$ denote the unit vector with an \emph{one}
in the $i$-th coordinate, let $\bf{0}$ denote the all-zero vector,
and let $\bf{1}$ denote the all-one vector.
For two vectors $X=(x_1 , x_2 , \ldots ,
x_n)$, ${Y=(y_1,y_2 , \ldots ,y_n) \in \R^n}$, and a scalar $\alpha \in \Z$,
we define the \emph{vector addition} $X+Y \deff (x_1 + y_1 ,
x_2 + y_2 , \ldots , x_n + y_n )$, and the \emph{scalar
multiplication} $\alpha X
\deff (\alpha x_1 , \alpha x_2 , \ldots , \alpha x_n)$.
For a set $\cS \subset \R^n$ and a vector $U \in \R^n$ the
\emph{shift} of $\cS$ by $U$ is $U + \cS \deff \{ U + X ~:~ X \in
\cS \}$.

Let $G$ be an Abelian group and let
$\beta=\beta_1,\beta_2,...,\beta_n$
be a sequence with $n$ elements of $G$. For every
${X=(x_1,x_2,...x_n)\in \Z^n}$ we define
$$
X\cdot \beta=\sum_{i=1}^n x_i \beta_i,
$$
where addition and multiplication are performed in $G$.

A set $\cS \subset \Z^n$
\emph{splits} an Abelian group $G$ with a \emph{splitting~sequence}
$\beta=\beta_1,\beta_2,...,\beta_n$,
$\beta_i\in G$, for each $i$, $1 \leq i \leq n$, if the set
$\{ \cE\cdot \beta~ ~:~ \cE\in \cS \}$ contains $|\cS|$ distinct elements
from $G$. We will call this operation a \emph{generalized splitting}.
The splitting defined in~\cite{Haj42} and discussed in~\cite{Hic83,Stein67,Stein74,Ste86}
is a special case of the generalized splitting. It was used for
the shapes known as cross and semi-cross~\cite{Stein74,Stein84}, and
quasi-cross~\cite{Sch11}. The $B_h[\ell]$ sequences defined in~\cite{KBE11}
and discussed in~\cite{KBE11,KLNY} for construction of codes which correct
asymmetric errors with limited-magnitude are also a special case of the generalized splitting.
These $B_h [\ell]$ sequences are modification of the well known Sidon sequences
and their generalizations~\cite{Bry04}. The generalized splitting also
makes generalization for a method discussed by Varshamov~\cite{Var64,Var65}.
The generalization can be easily obtained, but to our knowledge a general and complete
proven theory was not given before.

\begin{lemma}
If $\Lambda$ is a lattice packing of $\Z^n$ with a shape
${\cS\subset\Z^n}$ then there exists an Abelian group $G$
of order $V(\Lambda)$, such that $\cS$ splits $G$.
\end{lemma}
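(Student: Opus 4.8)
The plan is to let $G$ be the quotient group $\Z^n/\Lambda$. Since $\Lambda$ is a lattice packing of $\Z^n$, by definition $\Lambda\subseteq\Z^n$, and since a basis of $\Lambda$ consists of $n$ linearly independent vectors, $\Lambda$ is a full-rank sublattice of $\Z^n$ and therefore has finite index; the standard structure theory of integer lattices (e.g.\ the Smith normal form of $\bG$) gives $[\Z^n:\Lambda]=|\det\bG|=V(\Lambda)$. Hence $G$ is an Abelian group of order $V(\Lambda)$. Let $\phi:\Z^n\to G$ be the canonical projection and put $\beta_i\deff\phi(\mathbf{e}_i)$ for $1\le i\le n$. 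Because $\phi$ is a group homomorphism, for every $X=(x_1,\ldots,x_n)\in\Z^n$ we have
\[
X\cdot\beta=\sum_{i=1}^n x_i\beta_i=\phi\Bigl(\sum_{i=1}^n x_i\mathbf{e}_i\Bigr)=\phi(X).
\]

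It then remains to prove that $\phi$ is injective on $\cS$, for then $\{\cE\cdot\beta:\cE\in\cS\}=\{\phi(\cE):\cE\in\cS\}$ has exactly $|\cS|$ elements and $\cS$ splits $G$ with splitting sequence $\beta=\beta_1,\ldots,\beta_n$. Suppose $\phi(\cE)=\phi(\cE')$ for some $\cE,\cE'\in\cS$, and set $U\deff\cE-\cE'\in\Lambda$. Consider the two copies of $\cS$ placed on the lattice points $\mathbf{0}\in\Lambda$ and $U\in\Lambda$, namely $\cS$ itself and $U+\cS$. Since $\cE\in\cS$ and also $\cE=U+\cE'\in U+\cS$, these two copies intersect, so the packing property of $\Lambda$ forces them to coincide, i.e.\ $U=\mathbf{0}$ and $\cE=\cE'$. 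This proves injectivity and completes the argument.

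I expect the only step carrying real content to be this injectivity claim, which is precisely where the packing hypothesis is used; everything else is bookkeeping about quotient groups and homomorphisms. The one place where care is needed — rather than a genuine obstacle — is the identity $[\Z^n:\Lambda]=|\det\bG|$, which must be invoked so that $|G|$ comes out exactly equal to $V(\Lambda)$ and not merely comparable to it; note that no volume hypothesis on $\Lambda$ is required here, since the statement concerns packings and not tilings.
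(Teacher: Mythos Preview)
Your proof is correct and follows essentially the same approach as the paper: take $G=\Z^n/\Lambda$, set $\beta_i=\phi(\mathbf{e}_i)$ for the canonical projection $\phi$, and use the packing hypothesis to show that $\cE\mapsto\cE\cdot\beta=\phi(\cE)$ is injective on $\cS$. The only difference is cosmetic---you justify $|G|=V(\Lambda)$ via the Smith normal form where the paper simply asserts it, and you phrase injectivity directly rather than by contradiction.
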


\begin{proof}
Let $G=\Z^n/\Lambda$ and let $\phi:\Z^n\rightarrow G$ be the group homomorphism
which maps each element $X\in \Z^n$ to the coset $X+\Lambda$.
Clearly, $|G|=V(\Lambda)$.

Let $\beta= \beta_1 , \beta_2 ,..., \beta_n$,
be a sequence defined by
$\beta_i=\phi(\mathbf{e}_i)$ for each $i$, $1 \leq i \leq n$.
Clearly, for each $X \in \Z^n$ we have ${\phi(X)=X \cdot \beta}$.

Now assume that there exist two distinct elements ${\cE_1,\cE_2\in \cS}$, such that
$$
\phi(\cE_1)=\cE_1\cdot\beta =\cE_2\cdot\beta=\phi(\cE_2)~.
$$
It implies that
$$
\phi(\cE_1-\cE_2)=(\cE_1 - \cE_2) \cdot \beta = \cE_1 \cdot \beta - \cE_2 \cdot \beta =0~.
$$
Since $\phi (X)=0$ if and only if $X \in \Lambda$
it follows that there exists $X\in \Lambda$, $X\neq \mathbf{0}$, such that
$$
\cE_1=\cE_2+X~.
$$
Therefore, $\cS\cap (X+\cS)\neq \varnothing$ which contradicts the fact that $\Lambda$ is a lattice
packing of $\Z^n$ with the shape $\cS$.

Thus, $\cS$ splits $G$ with the splitting sequence $\beta$.
\end{proof}

\begin{lemma}
Let $G$ be an Abelian group and let $\cS$ be a shape in $\Z^n$.
If $\cS$ splits $G$ with a splitting sequence
$\beta$ then there exists a lattice packing $\Lambda$
of $\Z^n$ with the shape $\cS$, for which $V(\Lambda)\leq |G|$.
\end{lemma}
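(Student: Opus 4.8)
The plan is to reverse the construction in the previous lemma. We are given an Abelian group $G$ and a splitting sequence $\beta = \beta_1, \beta_2, \ldots, \beta_n$ with $\beta_i \in G$ such that the map $\cE \mapsto \cE \cdot \beta$ is injective on $\cS$. First I would define the group homomorphism $\psi : \Z^n \rightarrow G$ by $\psi(X) = X \cdot \beta = \sum_{i=1}^n x_i \beta_i$; this is a homomorphism because addition in $G$ is the operation used in the definition of $X \cdot \beta$. Then I set $\Lambda \deff \ker \psi = \{ X \in \Z^n : X \cdot \beta = 0 \}$. Since $\Lambda$ is the kernel of a group homomorphism from $\Z^n$, it is a subgroup of $\Z^n$, hence a lattice (possibly of rank less than $n$; this is the point I address below). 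By the first isomorphism theorem, $\Z^n / \Lambda \cong \operatorname{im} \psi \subseteq G$, so the number of cosets of $\Lambda$ in $\Z^n$ is $|\operatorname{im}\psi| \leq |G|$, which is exactly the statement $V(\Lambda) \leq |G|$ once we identify $V(\Lambda)$ with the index $[\Z^n : \Lambda]$.

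Next I would verify that $\Lambda$ is a lattice packing of $\Z^n$ with $\cS$, i.e. that for every nonzero $X \in \Lambda$ the shifted copies $\cS$ and $X + \cS$ are disjoint. Suppose not: then there are $\cE_1, \cE_2 \in \cS$ with $\cE_1 = \cE_2 + X$ for some nonzero $X \in \Lambda$. Applying $\psi$ and using $\psi(X) = 0$ gives $\cE_1 \cdot \beta = \cE_2 \cdot \beta$ with $\cE_1 \neq \cE_2$, contradicting the assumption that $\cS$ splits $G$ with splitting sequence $\beta$ (which says the values $\cE \cdot \beta$, $\cE \in \cS$, are pairwise distinct). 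Hence $\Lambda$ is a packing lattice. Equivalently, one can phrase this as: the translates $X + \cS$ for $X \in \Lambda$ are disjoint, which is precisely the packing condition.

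The one genuine obstacle is that an arbitrary subgroup of $\Z^n$ need not have rank $n$, whereas the paper's definition of a lattice requires $n$ linearly independent basis vectors, and $V(\Lambda)$ is defined via $|\det \bG|$. If $\operatorname{rank}\Lambda < n$ then $[\Z^n : \Lambda] = \infty$ and there is no finite-volume lattice. I would handle this by observing that since $\cS$ is a finite shape with $|\cS| \geq 1$ and $G$ is finite, the subgroup $\Lambda = \ker\psi$ has finite index $[\Z^n:\Lambda] = |\operatorname{im}\psi| \leq |G| < \infty$; a subgroup of $\Z^n$ of finite index necessarily has rank $n$ (its quotient is finite), so it is a lattice in the paper's sense, and we may take $V(\Lambda) = [\Z^n : \Lambda] \leq |G|$. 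A clean way to exhibit a full-rank sublattice explicitly is to note that for each $i$ the element $\beta_i \in G$ has finite order $m_i = \operatorname{ord}(\beta_i)$, so $m_i \mathbf{e}_i \in \Lambda$; thus $\Lambda$ contains the full-rank sublattice generated by $m_1 \mathbf{e}_1, \ldots, m_n \mathbf{e}_n$ and therefore is itself full-rank. This settles the rank issue, and combined with the packing verification and the index computation above, completes the proof.
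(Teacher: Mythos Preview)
Your proof is correct and follows essentially the same approach as the paper: define the homomorphism $X \mapsto X\cdot\beta$, take $\Lambda$ to be its kernel, bound $V(\Lambda)$ by $|G|$ via the first isomorphism theorem, and verify the packing condition by the same contradiction argument. Your additional discussion of the full-rank issue is in fact more careful than the paper's own proof, which simply asserts that $\ker(\phi)$ is a lattice without justification.
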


\begin{proof}
Consider the group homomorphism $\phi:\Z^n\rightarrow G$ defined by
$$
\phi(X)=X\cdot \beta.
$$

Clearly, $\Lambda=\ker(\phi)$ is a lattice and the volume
of $\Lambda$, $V(\Lambda)=|\phi(\Z^n)|\leq |G|$.

To complete the proof we have to show that
$\Lambda$ is a packing of $\Z^n$ with the shape $\cS$. Assume to the contrary that
there exists $X\in \Lambda$ such
that $\cS\cap (X+\cS)\neq \varnothing$. Hence, there exist two
distinct elements
$\cE_1,\cE_2\in \cS$ such that $\cE_1=\cE_2+X$ and therefore,
$$
\phi(\cE_1)=\phi(\cE_2+X)=\phi(\cE_2)+\phi(X)=\phi(\cE_2).
$$
Therefore,
$\cE_1\cdot \beta=\cE_2\cdot \beta$, which contradicts the
fact that $\cS$ splits $G$ with the splitting sequence $\beta$.

Thus, $\Lambda$ is a lattice packing with the shape $\cS$.
\end{proof}

\begin{cor}\label{cor:tilingGroup}
A lattice tiling of $\Z^n$ with the shape $\cS\subseteq \Z^n$ exists
if and only if there exists an Abelian group
$G$ of order $|\cS|$ such that $\cS$ splits $G$.
\end{cor}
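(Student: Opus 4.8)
The plan is to deduce the corollary directly from the two lemmas immediately preceding it, together with Lemma~\ref{lem:vol_lat}; no new construction is needed, only a chaining of the volume (in)equalities.

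First I would handle the ``only if'' direction. Suppose $\Lambda$ is a lattice tiling of $\Z^n$ with $\cS$. In particular $\Lambda$ is a lattice packing, so by the last lemma before this corollary there is an Abelian group $G$ of order $V(\Lambda)$ such that $\cS$ splits $G$. Since a lattice tiling forces $V(\Lambda)=|\cS|$ by Lemma~\ref{lem:vol_lat}, this group $G$ has order exactly $|\cS|$, which is precisely the asserted conclusion.

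Next I would handle the ``if'' direction. Suppose $\cS$ splits an Abelian group $G$ with $|G|=|\cS|$ via a splitting sequence $\beta$. By the second of the two preceding lemmas there exists a lattice packing $\Lambda$ of $\Z^n$ with $\cS$ for which $V(\Lambda)\le |G|=|\cS|$. On the other hand, the necessary condition in Lemma~\ref{lem:vol_lat} gives $V(\Lambda)\ge |\cS|$ for any lattice packing. Combining, $V(\Lambda)=|\cS|$, and then the sufficient condition in Lemma~\ref{lem:vol_lat} (a lattice packing with $V(\Lambda)=|\cS|$ is automatically a tiling) shows that $\Lambda$ is a lattice tiling of $\Z^n$ with $\cS$.

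The only step that requires any care --- and really the only ``obstacle'' --- is arranging the chain $|\cS|\le V(\Lambda)\le |G|=|\cS|$ so that the lower bound coming from Lemma~\ref{lem:vol_lat} and the upper bound coming from the packing lemma pinch $V(\Lambda)$ to exactly $|\cS|$; once that equality is in hand, the sufficiency clause of Lemma~\ref{lem:vol_lat} closes the argument. Everything else is a direct appeal to the two lemmas just established, so the proof should be short.
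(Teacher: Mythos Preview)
Your proposal is correct and is exactly the argument the paper intends: the corollary is stated without proof precisely because it follows by chaining the two preceding lemmas with Lemma~\ref{lem:vol_lat} in the way you describe. One small wording slip: in the ``only if'' direction you invoke ``the last lemma before this corollary,'' but the content you use (packing $\Rightarrow$ group of order $V(\Lambda)$) is the \emph{first} of the two preceding lemmas, not the second.
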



If our shape $\cS$ is not discrete, i.e. cannot be
represented by a set of $n$-dimensional units cubes,
two of which are adjacent only if they share an $(n-1)$-dimensional
unit cube, then clearly tiling can be represented
with a lattice, but cannot be represented with
a splitting sequence. But, if our shape $\cS$ is in $\Z^n$
then we can use both methods as they were proved
to be equivalent. In fact, both methods are complementary.
If we consider the matrix $\cH =[ \beta_1 ~ \beta_2 ~ \cdots ~ \beta_n]$
then the vector $X=(x_1,x_2,\ldots,x_n) \in \Z^n$
is contained in the related lattice if and only if $\cH X=0$.
Therefore, $\cH$ has some similarity to a parity-check matrix
in coding theory.
The representation of a lattice with its generator matrix
seems to be more practical. But, sometimes it is not
easy to construct one. Moreover, the splitting sequence
has in many cases a nice structure and from its structure
the general structure of the lattice can be found.
This is the case in the next two sections.
In Section~\ref{sec:constructSidon} we present two constructions of tilings
based on generalized splitting. Even though the second one generalizes the
first one, the mathematical structure of the first one has its own beauty
and hence both constructions are given. The construction
of the lattice, in $\R^n$, given in Section~\ref{sec:constructLattice},
was derived based on the structure of the lattices, in $\Z^n$,
obtained from the construction of the splitting sequences
in Section~\ref{sec:constructSidon}.

\section{Constructions based on Generalized Splitting}
\label{sec:constructSidon}

In this section we will present a construction of a tiling with $n$-dimensional
chairs based on generalized splitting. The $n$-dimensional chairs
which are considered in this section are discrete, i.e. $L,K \in \Z^n$.
We start with a construction in which all the
$\ell_i$'s are equal to $\ell$, and all the $k_i$'s
are equal to $\ell-1$. We generalize this construction to a case
in which all the $k_i$'s, with a possible exception of one,
have multiplicative inverses in the related Abelian group.

For the ring $G=\Z_q$, the ring of integers modulo $q$, let $G^*$
be the multiplicative group of $G$ formed from all the elements of
$G$ which have multiplicative inverses in $G$.

\begin{lemma}
\label{lemma:three_property} Let $n \geq 2$, $\ell \geq 2$, be two
integers and let $G$ be the ring of integers modulo
$\ell^n-(\ell-1)^n$, i.e. $\Z_{\ell^n-(\ell-1)^n}$. Then,

\begin{itemize}
\item[(P1)] $\ell-1$ and $\ell$ are elements of $G^*$.

\item[(P2)] $\alpha=\ell (\ell-1)^{-1}$ is an
element of order $n$ in $G^*$.

\item[(P3)] $1+\alpha+\alpha^2+ \cdots +\alpha^{n-1}$ equals to
\emph{zero} in~$G$.
\end{itemize}
\end{lemma}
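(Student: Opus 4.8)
The plan is to set $m = \ell^n - (\ell-1)^n$ so that $G = \Z_m$, and to establish the three properties by elementary divisibility arguments together with the factorization $a^n - b^n = (a-b)\sum_{i=0}^{n-1} a^i b^{\,n-1-i}$ applied to $a = \ell$, $b = \ell-1$ (for which $a-b = 1$).

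First, for (P1), I would reduce $m$ modulo $\ell$ and modulo $\ell-1$. Since $m \equiv -(\ell-1)^n \pmod{\ell}$ and $\gcd(\ell-1,\ell)=1$, we get $\gcd(m,\ell)=1$; since $m \equiv \ell^n \equiv 1 \pmod{\ell-1}$, we get $\gcd(m,\ell-1)=1$. Hence both $\ell$ and $\ell-1$ are invertible in $\Z_m$, i.e.\ they lie in $G^*$; in particular $\alpha = \ell(\ell-1)^{-1}$ is a well-defined element of $G^*$, being a product of units.

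Next, for (P2), observe that by the very definition of $m$ we have $\ell^n \equiv (\ell-1)^n \pmod m$; multiplying by $(\ell-1)^{-n}$ gives $\alpha^n = 1$ in $G^*$, so the order of $\alpha$ divides $n$. To rule out a smaller order it suffices to show $\alpha^d \neq 1$ for every integer $d$ with $1 \le d < n$, i.e.\ that $m \nmid \ell^d - (\ell-1)^d$. For this I would check that the function $d \mapsto \ell^d - (\ell-1)^d$ is strictly increasing on the positive integers — its increment $\ell^{d}(\ell-1) - (\ell-1)^{d}(\ell-2)$ is positive for $\ell \ge 2$ — so $0 < \ell^d - (\ell-1)^d < \ell^n - (\ell-1)^n = m$, and hence $m$ cannot divide it. Therefore $\alpha$ has order exactly $n$.

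Finally, (P3) is the step that needs a little care: one cannot simply cancel $\alpha - 1$ from $(\alpha-1)(1+\alpha+\cdots+\alpha^{n-1}) = \alpha^n - 1 = 0$, since $\alpha - 1$ may be a zero divisor in $\Z_m$. Instead I would multiply $S = 1 + \alpha + \cdots + \alpha^{n-1} = \sum_{i=0}^{n-1}\ell^i(\ell-1)^{-i}$ by the \emph{unit} $(\ell-1)^{n-1}$, obtaining $(\ell-1)^{n-1}S = \sum_{i=0}^{n-1} \ell^i (\ell-1)^{\,n-1-i}$. By the factorization identity with $a=\ell$, $b=\ell-1$ and $a-b=1$, this sum equals $\ell^n - (\ell-1)^n = m \equiv 0 \pmod m$. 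Since $(\ell-1)^{n-1}$ is invertible in $G$, it follows that $S = 0$, which is (P3). The only mild obstacles are the monotonicity check in (P2) and, in (P3), routing around the zero-divisor issue by clearing denominators with the unit $(\ell-1)^{n-1}$ rather than cancelling $\alpha-1$.
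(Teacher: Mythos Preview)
Your proof is correct. The overall structure matches the paper's, and (P2) is essentially identical in both. For (P1) you use the cleaner route of reducing $m$ modulo $\ell$ and $\ell-1$ to read off the gcd's directly, whereas the paper expands $\ell^n-(\ell-1)^n$ by the binomial theorem to exhibit an explicit inverse of $\ell-1$; both are fine.

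The one place worth comparing is (P3). You warn that $\alpha-1$ ``may be a zero divisor'' and therefore avoid cancelling it, instead multiplying $S$ by the unit $(\ell-1)^{n-1}$ and invoking the factorization $\ell^n-(\ell-1)^n=\sum_{i=0}^{n-1}\ell^i(\ell-1)^{n-1-i}$. This works, but the paper notices something slicker: from $\alpha=\ell(\ell-1)^{-1}$ one gets directly $\alpha-1=(\ell-(\ell-1))(\ell-1)^{-1}=(\ell-1)^{-1}$, so $\alpha-1$ is \emph{always} a unit and can be cancelled immediately from $(\alpha-1)S=\alpha^n-1=0$. Your caution was unnecessary here, though your detour is a perfectly valid (and only slightly longer) argument.
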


\begin{proof}
\begin{itemize}
\item[(P1)]
By definition, $\ell^n-(\ell-1)^n$ is \emph{zero} in $G=\Z_{\ell^n-(\ell-1)^n}$.
We also have that ${\ell^n-(\ell-1)^n = \sum_{i=0}^{n-1}{{n}\choose{i}}
(\ell-1)^i}$
${=1+(\ell-1) \sum_{i=1}^{n-1}{{n}\choose{i}} (\ell-1)^{i-1}}$.
It follows that ${(\ell-1) (-\sum_{i=1}^{n-1} {{n}\choose{i}} (\ell-1)^{i-1})=1}$ in $G$,
and hence, $\ell-1 \in G^*$. Since
$\ell^n-(\ell-1)^n$ is \emph{zero} in $G$, it follows that
$\ell^n = (\ell-1)^n$, and hence $\ell \in G^*$
if and only if $\ell-1 \in G^*$.

\item[(P2)] Clearly, $\alpha^n=\ell^n ((\ell-1)^{-1})^n$ and since
$\ell^n=(\ell-1)^n$, it follows that ${\alpha^n=(\ell-1)^n (\ell-1)^{-n}=1}$. This also
implies that $\alpha$ has a multiplicative inverse
and hence $\alpha=\ell (\ell-1)^{-1} \in G^*$.

Now, note that for each $i$, $1\leq i\leq n-1$, we have
${0<\ell^i-(\ell-1)^i<\ell^n-(\ell-1)^n}$.
Therefore, ${\ell^i \neq (\ell-1)^i}$ in $G$
and hence ${\alpha^i=\ell^i ((\ell-1)^{-1})^i\neq 1}$.
Thus, the order of $\alpha$ in $G^*$ is $n$.

\item[(P3)] Clearly, $0=\alpha^n-1=(\alpha-1)(1+\alpha+\alpha^2+...+\alpha^{n-1})$.
By definition, $\alpha=\ell (\ell-1)^{-1}$
and hence $\alpha (\ell -1) = \ell$, $\alpha \ell - \alpha = \ell$,
$\alpha - \alpha \ell^{-1} =1$, $\alpha -1 = \alpha \ell^{-1}$,
$\alpha -1 = (\ell-1)^{-1}$. Therefore,
$0= (\ell-1)^{-1}(1+\alpha+\alpha^2+...+\alpha^{n-1})$ which implies
that $1+\alpha+\alpha^2+...+\alpha^{n-1} =0$.
\end{itemize}
\end{proof}

\begin{theorem}
\label{thm:mainT} Let $n \geq 2$, $\ell \geq 2$, be two integers,
${G=\Z_{\ell^n-(\ell-1)^n}}$, and
$\alpha=\ell (\ell-1)^{-1}$. Then $\cS_{L,K}$,
${L=(\ell,\ell,\ldots,\ell)}$, $K=(\ell-1,\ell-1,\ldots,\ell-1)$,
splits $G$ with
the splitting sequence $\beta=\beta_1,\beta_2,...,\beta_n$ defined by
$$
\beta_i=\alpha^{i-1},~~~~ 1 \leq i \leq n~.
$$
\end{theorem}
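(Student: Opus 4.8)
The plan is to show that the map $\cE \mapsto \cE \cdot \beta$ is injective on $\cS_{L,K}$, which by definition of splitting is exactly what is required (note $|\cS_{L,K}| = \ell^n - (\ell-1)^n = |G|$ by Lemma~1, so injectivity is the whole story). First I would unwind what $\cE \cdot \beta$ looks like: for $\cE = (x_1,\dots,x_n) \in \cS_{L,K}$ we have $\cE \cdot \beta = \sum_{i=1}^n x_i \alpha^{i-1}$ in $G = \Z_{\ell^n-(\ell-1)^n}$, where $0 \le x_i < \ell$ for all $i$ and at least one $x_j$ satisfies $x_j < \ell - (\ell-1) = 1$, i.e. $x_j = 0$. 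So $\cS_{L,K}$ is precisely the set of "$\ell$-ary length-$n$ strings with at least one zero coordinate," and I must show distinct such strings give distinct values of $\sum x_i \alpha^{i-1}$ modulo $\ell^n - (\ell-1)^n$.

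Suppose for contradiction that $\cE_1 \ne \cE_2$ in $\cS_{L,K}$ with $\cE_1 \cdot \beta = \cE_2 \cdot \beta$. Writing $\cE_1 - \cE_2 = (z_1,\dots,z_n)$ with each $z_i \in \{-(\ell-1), \dots, \ell-1\}$, we get $\sum_{i=1}^n z_i \alpha^{i-1} \equiv 0 \pmod{\ell^n-(\ell-1)^n}$. The key idea I would exploit is property (P3): $1 + \alpha + \cdots + \alpha^{n-1} = 0$ in $G$, so I am free to add any integer multiple of the vector $\mathbf{1} = (1,1,\dots,1)$ to $(z_1,\dots,z_n)$ without changing the value of the inner product with $\beta$. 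Using this, I would normalize: add a suitable constant so that all coordinates lie in $\{0,1,\dots,\ell-1\}$ (possible since the original spread $z_i \in [-(\ell-1),\ell-1]$ has width $2\ell-2 < \ell$... wait — that is false, so here is where care is needed). The honest approach: since $\cE_1, \cE_2$ each have a zero coordinate, but possibly in different positions, I would instead argue directly. Multiply the relation $\sum z_i \alpha^{i-1} = 0$ by $(\ell-1)^{-1}$ raised to appropriate powers, or better, multiply through by $(\ell-1)^{n-1}$ and use $\alpha^{i-1}(\ell-1)^{n-1} = \ell^{i-1}(\ell-1)^{n-i}$ to clear denominators, obtaining an honest integer congruence $\sum_{i=1}^n z_i \ell^{i-1}(\ell-1)^{n-i} \equiv 0 \pmod{\ell^n-(\ell-1)^n}$.

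The main obstacle — and where I expect the real work to be — is the following bounding argument. I would show that for $\cE \in \cS_{L,K}$, the integer $N(\cE) := \sum_{i=1}^n x_i \ell^{i-1}(\ell-1)^{n-i}$ (where $0 \le x_i \le \ell-1$ with some $x_j = 0$) ranges over a set of exactly $\ell^n-(\ell-1)^n$ values that are pairwise distinct modulo $\ell^n-(\ell-1)^n$; equivalently, these values form a complete residue system. One clean way: first check $\sum_{i=1}^n (\ell-1)\,\ell^{i-1}(\ell-1)^{n-i} = (\ell-1)\sum_{i=1}^n \ell^{i-1}(\ell-1)^{n-i}$, and telescoping $\ell^{i}(\ell-1)^{n-i-1} - \ell^{i-1}(\ell-1)^{n-i}$ type identities show $\sum_{i=1}^n \ell^{i-1}(\ell-1)^{n-i} = \ell^n - (\ell-1)^n$ over $\ell-1$... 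I would verify $(\ell-(\ell-1))\sum_{i=1}^n \ell^{i-1}(\ell-1)^{n-i} = \ell^n - (\ell-1)^n$, a geometric-type telescoping, so the weights $w_i := \ell^{i-1}(\ell-1)^{n-i}$ satisfy $\sum_i w_i = \ell^n-(\ell-1)^n$. Then the map $(x_1,\dots,x_n) \mapsto \sum x_i w_i \bmod (\ell^n-(\ell-1)^n)$ on the full box $\{0,\dots,\ell-1\}^n$ is a surjection onto $G$ whose fibers all have size $\ell^n/(\ell^n-(\ell-1)^n)$ on average; I would pin down that two full-box points collide iff they differ by $(\ell-1,\ell-1,\dots,\ell-1) - (0,\dots,0)$-type shifts, precisely because $\sum w_i$ equals the modulus — so in each fiber there is exactly one representative with a zero coordinate. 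That last sentence is the crux and deserves a careful combinatorial proof (essentially a mixed-radix / carry argument showing the only coincidences come from the "all coordinates maximal" wrap-around), after which injectivity on $\cS_{L,K}$, and hence the splitting property, follows.
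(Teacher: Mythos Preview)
Your approach is sound in outline but genuinely different from the paper's. The paper proves \emph{surjectivity} of $\cE\mapsto\cE\cdot\beta$ by an explicit successor algorithm: assuming $x=\cE\cdot\beta$ with $\cE\in\cS_{L,K}$, it constructs $\tilde\cE\in\cS_{L,K}$ with $x+1=\tilde\cE\cdot\beta$ via a short case analysis using (P3) and the carry rule $\ell\beta_i=(\ell-1)\beta_{i+1}$; the cardinality equality $|\cS_{L,K}|=|G|$ then gives the splitting. You instead go for \emph{injectivity}: clearing denominators to work with integer weights $w_i=\ell^{i-1}(\ell-1)^{n-i}$, observing the telescoping identity $\sum_i w_i=\ell^n-(\ell-1)^n$, and asserting that collisions on the full box $\{0,\dots,\ell-1\}^n$ occur only along the diagonals $\{\cE+c\cdot\mathbf 1\}$, so that each fiber meets $\cS_{L,K}$ in exactly one point.

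That diagonal-fiber claim is correct, and once established your argument finishes cleanly. But you rightly flag it as the crux, and it is not a throwaway: proving that $\sum_i z_i w_i\equiv 0\pmod{\ell^n-(\ell-1)^n}$ with $|z_i|\le\ell-1$ forces $z_1=\cdots=z_n$ is where all the content lives. Reductions modulo $\ell$ or $\ell-1$ give only partial information and do not close the argument by themselves. In fact the kernel lattice you are implicitly analyzing is exactly the lattice $\Lambda$ of Theorem~\ref{thm:tiling} specialized to $\ell_i=\ell$, $k_i=\ell-1$ (sum the rows of $\mathbf G$ to get $\mathbf 1$), so your crux is precisely the packing statement proved there via Lemma~\ref{lem:packing_prop}. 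The paper's inductive construction sidesteps this by never touching the kernel: it builds representatives one at a time. Your route trades the successor-map case analysis for the comparable work of pinning down the collision lattice; both are valid, and yours gives more structural insight, but as written the proposal still owes that central lemma a proof.
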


\begin{proof}
We will show by induction that every element in $G$ can be
expressed in the form $\cE\cdot \beta$, for some $\cE\in \cS_{L,K}$.

The basis of induction is $0=\mathbf{0}\cdot \beta$.

For the induction step we have to show that
if $x\in G$ can be presented as $x=\cE\cdot \beta$ for some $\cE\in\cS_{L,K}$
(i.e. ${\cE=(\varepsilon_1,\varepsilon_2,...,\varepsilon_n) \in \Z^n}$,
$0\leq \varepsilon_i\leq \ell-1$, $1 \leq i \leq n$,
and for some $j$, $\varepsilon_j=0$),
then also $x+1$ can be presented in the same way. In other words,
$x+1=\tilde{\cE}\cdot \beta$, where
$\tilde{\cE}=(\tilde{\varepsilon}_1,\tilde{\varepsilon}_2,...,\tilde{\varepsilon}_n) \in \cS_{L,K}$.

If $\varepsilon_1<\ell-1$ and there exists $j \neq 1$
such that $\varepsilon_j=0$ then
$$
x+1=\tilde{\cE}\cdot \beta,
$$
where $\tilde{\cE}=\cE+\mathbf{e}_1$, $0\leq \tilde{\varepsilon}_i\leq \ell-1$,
$\tilde{\varepsilon}_j=0$ and the induction step is proved.

If $\varepsilon_1=0$ and there is no $j \neq 1$
such that $\varepsilon_j=0$ then by Lemma~\ref{lemma:three_property}(P3)
we have that $\sum_{i=1}^n \beta_i =0$ and hence
$$
x+1 = (\cE + \mathbf{e}_1 - \mathbf{1}) \cdot \beta ~,
$$
i.e. $\tilde{\cE} = \cE + \mathbf{e}_1 - \mathbf{1}$ is the required
element of $\cS_{L,K}$ and the induction step is proved.

Now, assume that $\varepsilon_1=\ell -1$. Let $j$,
$2\leq j\leq n$ be the smallest index such that
$\varepsilon_j=0$.

$$
x+1=\ell \beta_1 +\sum_{i=2}^n \varepsilon_i \beta_i.
$$
Note that for each $i$, $1\leq i\leq n-1$,
$$
\ell \beta_i=\ell \ell^{i-1} ((\ell-1)^{-1})^{i-1}=(\ell-1) \ell^{i}((\ell-1)^{-1})^{i}=(\ell-1) \beta_{i+1}.
$$
Therefore,
$$
x+1=(\ell-1+\varepsilon_2) \beta_2+\sum_{i=3}^n \varepsilon_i \beta_i.
$$
If $j=2$ then $\tilde{\cE}=(0,\ell-1,\varepsilon_3,\ldots,\varepsilon_n)$
and the induction step is proved.
If $\varepsilon_2 > 0$, i.e. $j > 2$, then
$$
x+1=(\varepsilon_2-1) \beta_2+\ell \beta_2+\sum_{i=3}^n\varepsilon_i \beta_i
$$
$$
=(\varepsilon_2-1) \beta_2+(\ell-1+\varepsilon_3) \beta_3+\sum_{i=4}^n\varepsilon_i \beta_i.
$$
By iteratively continuing in the same manner we obtain
$$
x+1=\sum_{i=2}^{j-1}(\varepsilon_i-1) \beta_i+(\ell-1+\varepsilon_j) \beta_j+\sum_{i=j+1}^n\varepsilon_i \beta_i
$$
and since $\varepsilon_j=0$ we have that
$$\tilde{\cE}=(0,\varepsilon_2-1,\ldots,\varepsilon_{j-1}-1,
\ell-1,\varepsilon_{j+1},\ldots,\varepsilon_n)$$
and the induction step is proved.

Since $|G|=|\cS_{L,K}|$, it follows that the set ${\{ \cE\cdot \beta ~:~ \cE\in \cS_{L,K} \}}$
has $|\cS_{L,K}|$ elements.
\end{proof}

\begin{cor}
For each $n \geq 2$ and $\ell \geq 2$ there exists a lattice
tiling of $\Z^n$ with $\cS_{L,K}$,
$L=(\ell,\ell,\ldots,\ell)$, ${K=(\ell-1,\ell-1,\ldots,\ell-1)}$.
\end{cor}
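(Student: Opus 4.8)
The plan is to read this off directly from Theorem~\ref{thm:mainT} together with Corollary~\ref{cor:tilingGroup}, using the volume formula of the first lemma to match orders. First I would record that for $L=(\ell,\ell,\ldots,\ell)$ and $K=(\ell-1,\ell-1,\ldots,\ell-1)$ the volume lemma gives $|\cS_{L,K}| = \ell^n - (\ell-1)^n$. Next, set $G = \Z_{\ell^n-(\ell-1)^n}$, which is an Abelian group of order exactly $|\cS_{L,K}|$. Theorem~\ref{thm:mainT} asserts that $\cS_{L,K}$ splits this particular $G$ (with the geometric–progression splitting sequence $\beta_i = \alpha^{i-1}$, $\alpha = \ell(\ell-1)^{-1}$), so the hypothesis of the "if" direction of Corollary~\ref{cor:tilingGroup} is met.

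Then I would simply invoke Corollary~\ref{cor:tilingGroup}: since there exists an Abelian group $G$ of order $|\cS_{L,K}|$ such that $\cS_{L,K}$ splits $G$, a lattice tiling of $\Z^n$ with $\cS_{L,K}$ exists. This finishes the argument. If one wants to be slightly more explicit and avoid quoting the corollary as a black box, one can instead chain the two intermediate lemmas directly: applying the lemma that turns a splitting into a lattice packing yields a lattice $\Lambda = \ker(\phi)$, where $\phi:\Z^n \to G$ is $\phi(X) = X\cdot\beta$, with $V(\Lambda) \leq |G| = |\cS_{L,K}|$; Theorem~\ref{thm:mainT} guarantees $\phi$ is surjective (every element of $G$ is of the form $\cE\cdot\beta$ with $\cE\in\cS_{L,K}$), so in fact $V(\Lambda) = |G| = |\cS_{L,K}|$; and then the sufficiency clause of Lemma~\ref{lem:vol_lat} ($V(\Lambda)=|\cS|$ upgrades a lattice packing to a lattice tiling) gives the claim.

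There is essentially no obstacle here — all the real work is in Theorem~\ref{thm:mainT} (the inductive verification that every residue is hit by some $\cE\cdot\beta$, handling the carry cases $\varepsilon_1 < \ell-1$, $\varepsilon_1 = 0$ with no other zero coordinate, and $\varepsilon_1 = \ell-1$) and in Lemma~\ref{lemma:three_property} (that $\ell,\ell-1$ are units, $\alpha$ has order $n$, and $1+\alpha+\cdots+\alpha^{n-1}=0$), both already established. The only point that deserves a word of care is the equality $|G| = |\cS_{L,K}|$: the group modulus $\ell^n-(\ell-1)^n$ was chosen precisely to equal the volume computed in the first lemma, so the "necessary and sufficient" volume condition $V(\Lambda) = |\cS|$ is automatically in force once surjectivity of $\phi$ is known, and surjectivity is exactly the content of Theorem~\ref{thm:mainT}.
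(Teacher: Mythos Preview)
Your proposal is correct and matches the paper's approach exactly: the corollary is stated immediately after Theorem~\ref{thm:mainT} with no separate proof, the intended argument being precisely the combination of Theorem~\ref{thm:mainT} with Corollary~\ref{cor:tilingGroup} (and the volume formula) that you spell out. If anything, you have written out more detail than the paper does.
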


The next theorem and its proof are generalizations of
Theorem~\ref{thm:mainT} and its proof.

\begin{theorem}\label{thm:generalT}
Let $L=(\ell_1,\ell_2,...,\ell_n)$,
$K=(k_1,k_2,...,k_n)$ be two vectors in $\Z^n$
such that $0< k_i< \ell_i$ for each $i$, $1\leq i\leq n$.
Let $\tau=\prod_{i=1}^n \ell_i$,
$\kappa=\prod_{i=1}^n k_i$,
$G=\Z_{\tau-\kappa}$ and assume that for each $i$, $2\leq i\leq n$,
$k_i\in G^*$. Then $\cS_{L,K}$ splits $G$ with
the splitting sequence $\beta=\beta_1,\beta_2,...,\beta_n$ defined by
$$
\begin{array}{cc}
\beta_1=1~~~~~~~~~~~~~& \\

\beta_{i+1}=k_{i+1}^{-1}\ell_i \beta_i &1\leq i\leq n-1 ~.
\end{array}
$$
\end{theorem}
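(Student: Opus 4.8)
The plan is to mimic the proof of Theorem~\ref{thm:mainT} very closely, using an induction on the elements of $G=\Z_{\tau-\kappa}$ that shows every residue can be written as $\cE\cdot\beta$ with $\cE\in\cS_{L,K}$. Since $|G|=\tau-\kappa=|\cS_{L,K}|$ by Lemma~1, surjectivity of the map $\cE\mapsto\cE\cdot\beta$ on $\cS_{L,K}$ immediately gives injectivity, which is exactly the splitting property. So the whole content is the induction step: starting from $x=\cE\cdot\beta$ with $\cE=(\varepsilon_1,\dots,\varepsilon_n)$, $0\le\varepsilon_i<\ell_i$ and some coordinate $j$ with $\varepsilon_j<\ell_j-k_j$ (the defining condition for $\cS_{L,K}$ with non-unit $K$ — I must be careful to use the correct membership condition, not simply ``some $\varepsilon_j=0$''), I need to produce $\tilde\cE\in\cS_{L,K}$ with $\tilde\cE\cdot\beta=x+1$.

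The key algebraic fact, replacing (P3), is the ``carry'' identity: from $\beta_{i+1}=k_{i+1}^{-1}\ell_i\beta_i$ we get $\ell_i\beta_i=k_{i+1}\beta_{i+1}$ for $1\le i\le n-1$, and one also needs a wrap-around relation expressing $\ell_n\beta_n$ in terms of the earlier $\beta_i$'s (or directly in terms of $\beta_1=1$), obtained from the defining relation $\tau-\kappa=0$ in $G$. Concretely, iterating $\ell_i\beta_i=k_{i+1}\beta_{i+1}$ gives $\beta_{i}=\big(\prod_{t=1}^{i-1}\ell_t\big)\big(\prod_{t=2}^{i}k_t\big)^{-1}$, so $\ell_n\beta_n = \big(\prod_{t=1}^{n}\ell_t\big)\big(\prod_{t=2}^{n}k_t\big)^{-1} = \tau\cdot(k_1\cdot\kappa^{-1}\cdot k_1^{-1}\cdots)$ — I'd compute this carefully and use $\tau\equiv\kappa$ in $G$ to collapse it; the upshot should be a clean relation of the shape $k_1\ell_n\beta_n = \tau = \kappa\,(\text{something}) $, which I expect to reduce to something like $\ell_n\beta_n = k_1\beta_1 + (\text{lower terms})$ after accounting for the digit ranges. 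This wrap-around identity is the analogue of $\sum\beta_i=0$ and will be used exactly when the ``overflow'' propagates past the last coordinate.

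The induction step itself would be organized by cases, exactly parallel to Theorem~\ref{thm:mainT}. First, if $\varepsilon_1<\ell_1-1$ and the membership of $\cE$ in $\cS_{L,K}$ is still witnessed by some coordinate $j\ne 1$ after incrementing $\varepsilon_1$ (i.e. $\varepsilon_j<\ell_j-k_j$ for some $j\ne1$), take $\tilde\cE=\cE+\be_1$. Otherwise a carry is triggered: when $\varepsilon_1$ would reach $\ell_1$, rewrite $\ell_1\beta_1 = k_2\beta_2$ (more precisely $\ell_1\beta_1$ absorbed into the second coordinate), so $\tilde\varepsilon_1=0$ and $\tilde\varepsilon_2$ increases by $k_2$; if this keeps $\tilde\varepsilon_2<\ell_2$ and some witness coordinate survives, we are done, otherwise we carry again via $\ell_2\beta_2=k_3\beta_3$, and so on. Propagating a carry through coordinate $i$ replaces $(\ell_i + r)\beta_i$ by $r\beta_i + k_{i+1}\beta_{i+1}$; one must check the digits stay in range and that after the cascade a valid witness coordinate $j$ with $\tilde\varepsilon_j<\ell_j-k_j$ appears — this is where the structure of $\cS_{L,K}$ (the removed sub-box) is used. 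The hardest and most delicate part will be handling the case where the carry propagates all the way past coordinate $n$: there one invokes the wrap-around identity for $\ell_n\beta_n$, and must verify both that the resulting tuple has all digits in $[0,\ell_i)$ and, crucially, that it lies in $\cS_{L,K}$ — i.e. some coordinate lands below $\ell_j-k_j$. I expect the bookkeeping for which coordinate becomes the new witness, and checking the digit bounds after a full cascade, to be the main obstacle; the requirement $k_i\in G^*$ for $i\ge2$ (but not necessarily $k_1$) is exactly what makes the $\beta_i$ well-defined and the carries reversible in the needed direction.
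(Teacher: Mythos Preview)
Your overall strategy --- induction on $x\in G$ showing surjectivity of $\cE\mapsto\cE\cdot\beta$, then counting --- is exactly the paper's approach, and your carry identity $\ell_i\beta_i=k_{i+1}\beta_{i+1}$ is the right engine. But your case analysis has a genuine gap, and your ``wrap-around'' is aimed at the wrong place.

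First, the wrap-around is cleaner than you expect: iterating the carry identity and using $\tau\equiv\kappa$ in $G$ gives simply $\ell_n\beta_n=k_1\beta_1=k_1$ (no lower-order terms). Telescoping $\sum_i(\ell_i-k_i)\beta_i$ then yields the key relation $(L-K)\cdot\beta=0$. This is the analogue of $\sum\alpha^i=0$ from Theorem~\ref{thm:mainT}.

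Second, your ``carry propagates past coordinate $n$'' case is vacuous. If $\varepsilon_1=\ell_1-1$ then coordinate~$1$ is not a witness (since $\ell_1-1\ge\ell_1-k_1$), so some $j\ge2$ satisfies $\varepsilon_j<\ell_j-k_j$; the carry cascade always halts at the first such $j$ and never reaches past $n$.

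The case you are actually missing is this: $\varepsilon_1=\ell_1-k_1-1$ and coordinate~$1$ is the \emph{only} witness (i.e.\ $\varepsilon_i\ge\ell_i-k_i$ for all $i\ge2$). Here $\varepsilon_1+1=\ell_1-k_1<\ell_1$, so no digit overflows and your carry mechanism does not fire --- yet $\cE+\be_1$ lies in the removed sub-box and is not in $\cS_{L,K}$. The fix is to take $\tilde\cE=\cE+\be_1-(L-K)$, using $(L-K)\cdot\beta=0$; one checks directly that all coordinates of $\tilde\cE$ are nonnegative and that $\tilde\varepsilon_1=0<\ell_1-k_1$ supplies the new witness. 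Your proposal's ``Otherwise a carry is triggered: when $\varepsilon_1$ would reach $\ell_1$'' skips precisely this situation.
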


\begin{proof}
First we will show that $k_1 \beta_1=\ell_n \beta_n$.
Since $\tau-\kappa$ equals zero in $G$, it follows that $\tau=\kappa$ in $G$ and hence
$k_1=\ell_1 \ell_2\cdots \ell_n k_2^{-1} k_3^{-1}\cdots k_n^{-1}$.
Therefore,
$$
\ell_n \beta_n=\ell_n k_n^{-1} \ell_{n-1} \beta_{n-1}=~ \cdots
$$
$$
=\ell_n\ell_{n-1}\cdots \ell_1  k_n^{-1}k_{n-1}^{-1} \cdots k_2^{-1}\beta_1=k_1 \beta_1~.
$$
As an immediate consequence from definition we have that for each $i$, $1 \leq i \leq n-1$,
$$
\ell_i \beta_i = k_{i+1} \beta_{i+1} ~.
$$
Next, we will show that
\begin{equation}
\label{eq:LminusK_B0}
(L-K)\cdot \beta=0.
\end{equation}

$$
(L-K)\cdot \beta=\sum_{i=1}^{n}(\ell_i-k_i) \beta_i=\sum_{i=1}^n (\ell_i \beta_i-k_i \beta_i )
$$
$$
=\ell_n \beta_n-k_n \beta_n+\sum_{i=1}^{n-1} (k_{i+1} \beta_{i+1}-k_i \beta_i )
$$
$$
=\ell_n \beta_n-k_n \beta_n +k_n \beta_n -k_1 \beta_1=0
$$

Since $|\cS_{L,K}|=|G|$ it follows that to prove Theorem \ref{thm:generalT},
it is sufficient to show that each element in $G$ can be
expressed as $\cE\cdot \beta$, for some $\cE\in \cS_{L,K}$.
The proof will be done by induction.

The basis of induction is $0=\mathbf{0}\cdot \beta$.

In the induction step we will show that if $x\in G$
can be presented as $\cE\cdot \beta$ for some $\cE\in\cS_{L,K}$
then the same is true for $x+1$. In other words,
$x+1=\tilde{\cE}\cdot \beta$, where
$\tilde{\cE}=(\tilde{\varepsilon}_1,\tilde{\varepsilon}_2,...,\tilde{\varepsilon}_n) \in \cS_{L,K}$.

Assume
$$
x=\cE\cdot \beta,
$$
where $\cE=(\varepsilon_1,\varepsilon_2,\ldots,\varepsilon_n)$,
$0\leq \varepsilon_i< \ell_i$ for each $i$, and there exists a $j$ such that
$\varepsilon_j<\ell_j-k_j$.

If $\varepsilon_1 < \ell_1 -k_1 -1$ or
if $\varepsilon_1<\ell_1-1$ and there exists $j \neq 1$ such that
$\varepsilon_j < \ell_j -k_j$, then since $\beta_1 =1$ it follows that
$$
x+1=\tilde{\cE}\cdot \beta,
$$
where $\tilde{\cE}=\cE+\mathbf{e_1}$. Clearly,
$0\leq \tilde{\varepsilon}_i \leq \ell_i-1$; $\tilde{\varepsilon}_1< \ell_1 -k_1$
if $\varepsilon_1 < \ell_1 -k_1 -1$ and otherwise
$\tilde{\varepsilon}_j< \ell_j -k_j$. Hence, the induction step is proved.

If $\varepsilon_1 = \ell_1 -k_1 -1$ and there is
no $j \neq 1$ such that $\varepsilon_j < \ell_j -k_j$
then by (\ref{eq:LminusK_B0}) we have that
$(L-K) \cdot \beta =0$ and hence
$$
x+1 = (\cE + \mathbf{e}_1 - (L-K)) \cdot \beta ~,
$$
i.e. $\tilde{\cE} = \cE + \mathbf{e}_1 - L+K$ is the required
element of $\cS_{L,K}$ and the induction step is proved.

Now, assume that $\varepsilon_1=\ell_1-1$. Let $2\leq j\leq n$
be the smallest index such that $\varepsilon_j<\ell_j-k_j$.
$$
x+1=\ell_1 \beta_1+\sum_{i=2}^n \varepsilon_i \beta_i= (k_2+\varepsilon_2) \beta_2+\sum_{i=3}^n \varepsilon_i \beta_i.
$$

If $j=2$ then $\tilde{\cE}=(0,k_2+\varepsilon_2,\varepsilon_3,\ldots,\varepsilon_n)$
and the induction step is proved.
If $\varepsilon_2 \geq \ell_2 -k_2$ then
$$
x+1=\ell_2 \beta_2+(\varepsilon_2-(\ell_2-k_2)) \beta_2+\sum_{i=3}^n \varepsilon_i \beta_i
$$
$$
=(\varepsilon_2-(\ell_2-k_2)) \beta_2+(k_3+\varepsilon_3) \beta_3+\sum_{i=4}^n \varepsilon_i \beta_i.
$$
By iteratively continuing in the same manner we obtain
$$
x+1=\sum_{i=2}^{j-1}(\varepsilon_i-(\ell_i-k_i)) \beta_i+(k_j+\varepsilon_j) \beta_j+\sum_{i=j+1}^n \varepsilon_i \beta_i~,
$$
and since $\varepsilon_j < \ell_j -k_j$ we have that
$$\tilde{\cE}=(0,\varepsilon_2-\ell_2+k_2,\ldots,\varepsilon_{j-1}-\ell_{j-1}+k_{j-1},
k_j+\varepsilon_j,\varepsilon_{j+1},\ldots,\varepsilon_n)$$
is the element of $\cS_{L,K}$, and the induction step is proved.
%
\end{proof}

\begin{cor}\label{cor:intTilingLK}
Let $L=(\ell_1,\ell_2,...,\ell_n)$, $K=(k_1,k_2,...,k_n)$ be two vectors in $\Z^n$
such that $0< k_i< \ell_i$ for each $i$, ${1\leq i\leq n}$.
Let $\tau=\prod_{i=1}^n \ell_i$
and assume that $\gcd(k_i,\tau)=1$
for at least $n-1$ of the $k_i$'s. Then there exists a lattice
tiling of $\Z^n$ with $\cS_{L,K}$.
\end{cor}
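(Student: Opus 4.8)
The plan is to derive the corollary from Theorem~\ref{thm:generalT} and Corollary~\ref{cor:tilingGroup}. Write $\tau=\prod_{i=1}^n\ell_i$, $\kappa=\prod_{i=1}^n k_i$, and $G=\Z_{\tau-\kappa}$; since $0<k_i<\ell_i$ for each $i$ we have $\kappa<\tau$, so $G$ is well defined, and by the volume formula for the chair, $|G|=\tau-\kappa=|\cS_{L,K}|$. By Corollary~\ref{cor:tilingGroup}, a lattice tiling of $\Z^n$ with $\cS_{L,K}$ exists exactly when some Abelian group of order $|\cS_{L,K}|$ is split by $\cS_{L,K}$. Hence it suffices to show that $\cS_{L,K}$ splits this particular $G$, and for that I only need to verify the hypothesis of Theorem~\ref{thm:generalT}, namely that $k_i\in G^*$ for every $i$ with $2\le i\le n$.

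First I would normalize the order of the coordinates. A permutation $\pi$ of $\{1,\dots,n\}$ induces the linear bijection $x\mapsto(x_{\pi(1)},\dots,x_{\pi(n)})$ of $\Z^n$, which carries $\cS_{L,K}$ onto $\cS_{L',K'}$ with $L'=(\ell_{\pi(1)},\dots,\ell_{\pi(n)})$ and $K'=(k_{\pi(1)},\dots,k_{\pi(n)})$, and which carries a lattice tiling with $\cS_{L',K'}$ to one with $\cS_{L,K}$ (the image of a lattice under an integer permutation matrix is again a lattice). Moreover $\tau$ and $\kappa$ are symmetric functions of the entries of $L$ and $K$, hence unchanged. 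Therefore, choosing $\pi$ so that the (at most one) index $i$ for which $\gcd(k_i,\tau)=1$ might fail is sent to position $1$, I may assume from now on that $\gcd(k_i,\tau)=1$ for every $i\in\{2,\dots,n\}$.

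It then remains to see that $\gcd(k_i,\tau)=1$ forces $k_i\in G^*$, i.e.\ $\gcd(k_i,\tau-\kappa)=1$. Put $d=\gcd(k_i,\tau-\kappa)$. Since $k_i\mid\kappa$ we have $d\mid\kappa$, and combined with $d\mid(\tau-\kappa)$ this gives $d\mid\tau$; together with $d\mid k_i$ this yields $d\mid\gcd(k_i,\tau)=1$, so $d=1$ and $k_i$ is a unit modulo $\tau-\kappa$. Applying Theorem~\ref{thm:generalT} now produces a splitting of $G$ by $\cS_{L,K}$, and Corollary~\ref{cor:tilingGroup} converts it into the desired lattice tiling of $\Z^n$ with $\cS_{L,K}$ (undoing $\pi$ at the end).

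I do not expect a genuine obstacle here: the argument is essentially the observation that coprimality to $\tau$ propagates to coprimality to $\tau-\kappa$, plus routine bookkeeping. The two points that deserve a moment's care are confirming that $|G|=|\cS_{L,K}|$, so that the splitting furnished by Theorem~\ref{thm:generalT} is automatically a tiling rather than merely a packing (through Corollary~\ref{cor:tilingGroup}), and checking that permuting coordinates really preserves the lattice-tiling property, which it does because a permutation matrix is unimodular.
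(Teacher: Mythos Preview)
Your proposal is correct and is exactly the argument the paper has in mind: the corollary is stated without proof immediately after Theorem~\ref{thm:generalT}, and your derivation---permuting the exceptional index into the first slot, then showing $\gcd(k_i,\tau)=1$ forces $\gcd(k_i,\tau-\kappa)=1$ so that Theorem~\ref{thm:generalT} applies, and finally invoking Corollary~\ref{cor:tilingGroup}---fills in precisely the details the authors leave implicit.
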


\section{Tiling based on a Lattice}
\label{sec:constructLattice}

Next, we consider lattice tiling of $\R^n$ with $\cS_{L,K}\subset\R^n$,
where $L=(\ell_1,\ell_2,...,\ell_n),~K=(k_1,k_2,...,k_n)\in \R^n$.
We want to remind again that Mihalis Kolountzakis and James Schmerl pointed
on~\cite{Kol98,Sch94,Ste90}, where such tiling can be found.
For completeness and since our proof is slightly different we
kept this part in the paper.
For the proof of the next theorem we need the following lemma.

\begin{lemma}
\label{lem:packing_prop}
Let $X=(x_1,x_2,\ldots,x_n) \in \R^n$. Then,
${\cS_{L,K}\cap (X+\cS_{L,K})\neq \varnothing}$ if and
only if $|x_i|<\ell_i$, for $1 \leq i \leq n$,
and there exist integers $j$ and $r$, $1 \leq j,r \leq n$,
such that $x_j<\ell_j-k_j$ and $-(\ell_r-k_r)<x_r$.
\end{lemma}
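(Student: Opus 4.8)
The plan is to reduce the intersection condition for the two chairs to an intersection condition for their bounding boxes together with a ``not-in-the-notch'' condition, and then to unwind the latter coordinate by coordinate. Write $B_L=\prod_{i=1}^n[0,\ell_i)$ for the large box and recall that, by definition, $\cS_{L,K}=\{X\in B_L:\ \exists\, j,\ x_j<\ell_j-k_j\}$; equivalently, $X\in B_L$ fails to lie in $\cS_{L,K}$ exactly when $x_i\ge \ell_i-k_i$ for \emph{all} $i$, i.e.\ when $X$ lies in the removed corner box $C_K=\prod_{i=1}^n[\ell_i-k_i,\ell_i)$. So $\cS_{L,K}=B_L\setminus C_K$.

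First I would handle the box part. A point $Y$ lies in $B_L\cap(X+B_L)$ for some legal $Y$ iff the intervals $[0,\ell_i)$ and $[x_i,x_i+\ell_i)$ overlap for every $i$, which happens iff $-\ell_i<x_i<\ell_i$ for all $i$; this already gives the necessity of $|x_i|<\ell_i$. Assuming this, I claim $\cS_{L,K}\cap(X+\cS_{L,K})\ne\varnothing$ iff there is a point $Y\in B_L\cap(X+B_L)$ with $Y\notin C_K$ and $Y\notin X+C_K$. The forward direction is immediate since $\cS_{L,K}\subseteq B_L$ and $\cS_{L,K}$ avoids $C_K$. For the converse one must produce such a $Y$, and this is where the two ``escape'' indices $j$ and $r$ enter: the condition $x_j<\ell_j-k_j$ will let us push the $j$-th coordinate of $Y$ low enough to leave $C_K$, and the condition $-(\ell_r-k_r)<x_r$ will let us push the $r$-th coordinate of $Y$ high enough (relative to $X$) to leave $X+C_K$, i.e.\ to make $y_r-x_r<\ell_r-k_r$.

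The main construction, then, is: given $|x_i|<\ell_i$ for all $i$, a $j$ with $x_j<\ell_j-k_j$, and an $r$ with $x_r>-(\ell_r-k_r)$, build a point $Y$ coordinate by coordinate. For each coordinate $i$, the admissible interval is $I_i=[0,\ell_i)\cap[x_i,x_i+\ell_i)=[\max(0,x_i),\ \min(\ell_i,x_i+\ell_i))$, a nonempty interval of length $\ell_i-|x_i|>0$. In the $j$-th coordinate I would choose $y_j=\max(0,x_j)$; since $x_j<\ell_j-k_j$, this gives $y_j<\ell_j-k_j$ (using $x_j<\ell_j-k_j$ directly when $x_j\ge0$, and $y_j=0<\ell_j-k_j$ when $x_j<0$), so $Y\notin C_K$. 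In the $r$-th coordinate I would choose $y_r=\min(\ell_r,x_r+\ell_r)-\epsilon$ for small $\epsilon>0$, equivalently as close as possible to the top of $I_r$; then $y_r-x_r$ is just below $\min(\ell_r-x_r,\ell_r)$, and since $x_r>-(\ell_r-k_r)$ we get $\ell_r-x_r$ can be brought below... more carefully: $y_r-x_r$ ranges over $[\max(0,-x_r),\ \min(\ell_r-x_r,\ell_r))$ as $y_r$ ranges over $I_r$, and the hypothesis $x_r>-(\ell_r-k_r)$ means $-x_r<\ell_r-k_r$, so the value $y_r-x_r=\max(0,-x_r)$ (attained at $y_r=\max(0,x_r)$, the bottom of $I_r$) satisfies $y_r-x_r<\ell_r-k_r$, giving $Y\notin X+C_K$. (If $j=r$ one checks the two requirements on $y_j$ are compatible, which they are since both ask $y_j$ near the bottom of $I_j$.) For the remaining coordinates choose $y_i$ anywhere in $I_i$. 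Then $Y\in\cS_{L,K}$ and $Y-X\in\cS_{L,K}$, so $Y\in\cS_{L,K}\cap(X+\cS_{L,K})$.

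For necessity of the existence of $j$ and $r$: if $\cS_{L,K}\cap(X+\cS_{L,K})\ni Y$, then $Y\in\cS_{L,K}$ supplies an index $j$ with $y_j<\ell_j-k_j$, and since $y_j\ge x_j$ (as $y_j\in[x_j,x_j+\ell_j)$) we get $x_j\le y_j<\ell_j-k_j$; similarly $Y-X\in\cS_{L,K}$ supplies an index $r$ with $y_r-x_r<\ell_r-k_r$, and since $y_r<\ell_r$ we get $-x_r<\ell_r-y_r+(y_r-x_r)$... cleaner: $-x_r=(y_r-x_r)-y_r<(\ell_r-k_r)-0$ would need $y_r\ge0$, which holds since $y_r\in[0,\ell_r)$; hence $-x_r<\ell_r-k_r$, i.e.\ $x_r>-(\ell_r-k_r)$. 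The part I expect to require the most care is the bookkeeping of the half-open intervals and the case $j=r$ in the construction; everything else is a direct translation of the definition of $\cS_{L,K}$ as a box minus a corner box.
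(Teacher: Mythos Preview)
Your argument is correct and follows essentially the same route as the paper: derive the three conditions from a common point $Y$ in the forward direction, and for the converse build an explicit witness in $B_L\cap(X+B_L)$ that avoids both notches. The paper streamlines the construction by taking $a_i=\max\{x_i,0\}$ uniformly for \emph{every} coordinate (not just $j$ and $r$), which automatically handles the $j=r$ case and removes the need for your detour through ``top of $I_r$'' before settling on the bottom; otherwise the two proofs coincide.
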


\begin{proof}
Assume first that $\cS_{L,K}\cap (X+\cS_{L,K})\neq \varnothing$, i.e. there exists
$(a_1,a_2,...,a_n)\in\cS_{L,K}\cap (X+\cS_{L,K})$. By the definition of $\cS_{L,K}$
it follows that

\begin{equation}
\label{eq:int1}
0\leq a_i < \ell_i~,~~\text{for~each}~i,~1 \leq i \leq n~,
\end{equation}
and there exists a $j$ such that
\begin{equation}
\label{eq:int2}
a_j < \ell_j-k_j~.
\end{equation}
Similarly, for $X+\cS_{L,K}$ we have
\begin{equation}
\label{eq:int3}
x_i\leq a_i < x_i+\ell_i~,~~\text{for~each}~i,~1 \leq i \leq n~,
\end{equation}
and there exists an $r$ such that
\begin{equation}
\label{eq:int4}
a_r < x_r+\ell_r-k_r~.
\end{equation}
It follow from (\ref{eq:int1}) and (\ref{eq:int3}) that
$x_i  \leq a_i < \ell_i$ and
${-\ell_i \leq a_i -\ell_i < x_i}$ for each $i$, $1 \leq i \leq n$.
Hence, $|x_i| < \ell_i$ for each $i$, $1 \leq i \leq n$.
It follow from (\ref{eq:int2}) and (\ref{eq:int3}) that $x_j \leq a_j < \ell_j-k_j$.
It follows from (\ref{eq:int4}) and (\ref{eq:int1})
that $x_r > a_r-(\ell_r-k_r) \geq -(\ell_r-k_r)$.

Now, let $X=(x_1,x_2,\ldots,x_n) \in \R^n$ such that $|x_i|<\ell_i$
for each $i$, $1 \leq i \leq n$, and there exist $j,~r$
such that $x_j<\ell_j-k_j$ and $x_r>-(\ell_r-k_r)$.
Consider the point $A=(a_1,a_2,...,a_n) \in \R^n$, where
$a_i=\max\{x_i,0\}$.

By definition, for each $i$, $1\leq i\leq n$,
$$
0\leq a_i< \ell_i
$$
and $a_j<\ell_j-k_j$. Hence, $A\in \cS_{L,K}$.

Clearly, if $x_i<0$ then $a_i=0$
and if $x_i\geq0$ then $a_i=x_i$. In both cases, since $0<x_i+\ell_i$,
it follows that we have
$$
x_i\leq a_i<x_i+\ell_i~.
$$
We also have $0<x_r+\ell_r-k_r$, and therefore ${x_r\leq a_r<x_r+\ell_r-k_r}$.
Hence, $A\in X+\cS_{L,K}$.

Thus,
$A\in \cS_{L,K}\cap (X+\cS_{L,K})$,
i.e. $\cS_{L,K}\cap (X+\cS_{L,K})\neq \varnothing$.
\end{proof}

The next Theorem is a generalization of Corollary \ref{cor:intTilingLK}.

\begin{theorem}\label{thm:tiling}
Let $L=(\ell_1,\ell_2,...,\ell_n) \in \R^n$ and
$K=(k_1,k_2,...,k_n) \in \R^n$, $0<k_i<\ell_i$,
for all $1\leq i\leq n$.
Let $\Lambda$ be the lattice generated by the matrix

$$
{\mathbf{G}} \deff \left[\begin{array}{cccccc}
\ell_1 & -k_2 & 0&0 &\ldots &0 \\
0 & \ell_2 & -k_3 &0 &\ldots & 0 \\
\vdots & \vdots & \ddots & \ddots & \ddots & \vdots\\
0 &  \ldots & 0 &\ell_{n-2} &-k_{n-1} &0 \\
0 & 0 & \ldots & 0 & \ell_{n-1}&-k_{n} \\
-k_1 &0 &\ldots &0 &0 &\ell_n
 \end{array}\right].
$$
Then $\Lambda$ is a lattice tiling of $\R^n$ with $\cS_{L,K}$.
\end{theorem}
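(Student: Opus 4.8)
The plan is to verify the two conditions of Lemma~\ref{lem:vol_lat}: that $\Lambda$ is a lattice packing of $\R^n$ with $\cS_{L,K}$, and that $V(\Lambda)=|\cS_{L,K}|$. The second is a direct determinant computation: expanding $\det \bG$ along the first column (or recognizing the cyclic/near-diagonal structure) gives $\det \bG = \prod_{i=1}^n \ell_i - \prod_{i=1}^n k_i$ (the two surviving terms are the product of the diagonal $\ell_i$'s and, up to sign, the product coming from the $-k_i$ entries around the cycle), which by Lemma~1 equals $|\cS_{L,K}|$. Since volume equals shape size, the sufficient condition in Lemma~\ref{lem:vol_lat} promotes a packing to a tiling, so the entire burden is to show $\Lambda$ is a packing.

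To show $\Lambda$ is a packing I must show that for every nonzero $X\in\Lambda$, $\cS_{L,K}\cap(X+\cS_{L,K})=\varnothing$. Here is where Lemma~\ref{lem:packing_prop} does the work: it suffices to show that no nonzero $X=(x_1,\dots,x_n)\in\Lambda$ simultaneously satisfies $|x_i|<\ell_i$ for all $i$, together with the existence of indices $j,r$ with $x_j<\ell_j-k_j$ and $x_r>-(\ell_r-k_r)$. So I would assume, for contradiction, that such a nonzero $X\in\Lambda$ exists. Writing $X=\sum_{m=1}^n \lambda_m R_m$ for the rows $R_m$ of $\bG$ and integers $\lambda_m$, the coordinates come out as $x_i = \lambda_i \ell_i - \lambda_{i-1} k_i$ with indices read cyclically (so $x_1=\lambda_1\ell_1-\lambda_n k_1$). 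The strategy is to examine the signs of the $\lambda_m$. Suppose some $\lambda_m>0$; I would track the chain of implications forced by $|x_i|<\ell_i$: from $x_{m+1}=\lambda_{m+1}\ell_{m+1}-\lambda_m k_{m+1}$ and $0<k_{m+1}<\ell_{m+1}$, the bound $|x_{m+1}|<\ell_{m+1}$ together with $\lambda_m\ge 1$ forces $\lambda_{m+1}\ge 1$ as well (since $\lambda_{m+1}\le 0$ would make $x_{m+1}\le -k_{m+1}$... one has to be a little careful, but the point is $\lambda_m\ge1\Rightarrow\lambda_{m+1}\ge1$, and in fact the "slack" is controlled). Propagating around the cycle gives $\lambda_m\ge 1$ for all $m$; symmetrically $\lambda_m\le -1$ for some $m$ propagates to all $m$. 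Hence either all $\lambda_m\ge 1$, or all $\lambda_m\le -1$, or all $\lambda_m=0$ (the last being $X=0$, excluded).

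It remains to eliminate the two uniform-sign cases, and this is the step I expect to be the crux. If all $\lambda_m\ge 1$, I claim the index-$r$ condition fails: I would show $x_r\le -(\ell_r-k_r)$ for every $r$, i.e. $\lambda_r\ell_r - \lambda_{r-1}k_r \le -(\ell_r - k_r)$, equivalently $(\lambda_r+1)\ell_r \le (\lambda_{r-1}+1)k_r$. Combined over all $r$ around the cycle this telescopes to $\prod(\lambda_r+1)\ell_r \le \prod(\lambda_{r-1}+1)k_r$, i.e. $\prod \ell_r \le \prod k_r$, contradicting $k_r<\ell_r$ — so the per-$r$ inequalities cannot all be strict-free, meaning for some $r$ we actually have $x_r > -(\ell_r-k_r)$, but then chasing that back through the $|x_i|<\ell_i$ constraints forces a contradiction with all $\lambda_m\ge 1$. (The clean way: show directly that all $\lambda_m\ge1$ forces $x_i\ge \ell_i - (\text{something})$ making $|x_i|<\ell_i$ incompatible unless the $\lambda$'s collapse.) Symmetrically, all $\lambda_m\le -1$ makes the index-$j$ condition $x_j<\ell_j-k_j$ fail for all $j$ by the mirror-image telescoping $\prod\ell_j\le\prod k_j$. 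Either way we reach a contradiction, so no such nonzero $X$ exists, $\Lambda$ is a packing, and by the volume count it is a tiling. The main obstacle is bookkeeping the cyclic sign-propagation argument and the telescoping product inequality cleanly; everything else is routine.
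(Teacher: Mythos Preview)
Your overall structure matches the paper's: compute $V(\Lambda)=\prod\ell_i-\prod k_i$, reduce to packing via Lemma~\ref{lem:vol_lat}, and use Lemma~\ref{lem:packing_prop} to analyze a hypothetical nonzero $X\in\Lambda$ through its integer coefficients $\lambda_i$ (with $x_i=\lambda_i\ell_i-\lambda_{i-1}k_i$ cyclically). The gap is in how you handle the $\lambda_i$. Your propagation claim $\lambda_m\ge 1\Rightarrow\lambda_{m+1}\ge 1$ is false: with $\lambda_m=1$ and $\lambda_{m+1}=0$ one gets $x_{m+1}=-k_{m+1}$, which satisfies $|x_{m+1}|<\ell_{m+1}$. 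What the bound $|x_i|<\ell_i$ actually gives is $\lambda_i\in\{\lfloor\lambda_{i-1}k_i/\ell_i\rfloor,\lceil\lambda_{i-1}k_i/\ell_i\rceil\}$, and since $0<k_i<\ell_i$ this yields the \emph{monotonicity} $0\le\lambda_i\le\lambda_{i-1}$ whenever $\lambda_{i-1}\ge 0$ (and symmetrically for $\le 0$). Cycling around, $\lambda_0=\lambda_n\le\lambda_{n-1}\le\cdots\le\lambda_1\le\lambda_0$ forces all $\lambda_i$ equal to a common integer $\rho$. This is the missing idea, and it is exactly what the paper does.

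Once $\lambda_i\equiv\rho$, one has $x_i=\rho(\ell_i-k_i)$ for every $i$ and the contradiction is immediate with no telescoping: if $\rho\ge 1$ then $x_j\ge\ell_j-k_j$ for all $j$, so the $j$-condition of Lemma~\ref{lem:packing_prop} fails; if $\rho\le -1$ then $x_r\le -(\ell_r-k_r)$ for all $r$, so the $r$-condition fails. Note that you have these two cases swapped. Your telescoping attempt reflects this: assuming all $\lambda_m\ge 1$, you set out to prove $x_r\le -(\ell_r-k_r)$ for all $r$, but your own product inequality shows that this is impossible, after which the argument dissolves into ``chasing back.'' And merely knowing all $\lambda_m\ge 1$ (without equality) is genuinely insufficient: $\lambda_{j-1}=2,\ \lambda_j=1$ is compatible with $|x_j|<\ell_j$ and gives $x_j=\ell_j-2k_j<\ell_j-k_j$, so the $j$-condition can still hold. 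The monotonicity-forces-equality step is what closes the argument.
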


\begin{proof}
It is easy to verify that
$V(\Lambda)=|\det \mathbf{G} |= \prod_{i=1}^n \ell_i -\prod_{i=1}^n k_i=|\cS_{L,K}|$.
We will use Lemma~\ref{lem:vol_lat} to show that $\Lambda$ is a tiling of $\R^n$ with $\cS_{L,K}$.
For this, it is sufficient to show that $\Lambda$ is a packing of $\R^n$ with $\cS_{L,K}$.

Let $X\in \Lambda$, $X\neq \mathbf{0}$,
and assume to the contrary that
${\cS_{L,K}\cap (X+ \cS_{L,K}) \neq \varnothing}$.
Since $X \in \Lambda$
it follows that there exist integers
$\lambda_0,\lambda_1,\lambda_2,...,\lambda_n =\lambda_0$,
not all zeros, such that
$x_i=\lambda_i{\ell_i}-\lambda_{i-1}k_i$,
for every $i$, $1\leq i\leq n$. By Lemma~\ref{lem:packing_prop}
we have that for each $i$, $1 \leq i \leq n$,

$$
-\ell_i<\lambda_i\ell_i-\lambda_{i-1}k_i< \ell_i~,
$$
i.e.
$$
\frac{\lambda_{i-1}k_i}{\ell_i}-1<\lambda_i<\frac{\lambda_{i-1}k_i}{\ell_i} +1~.
$$
Since $\lambda_i$ is an integer it follows that
$\lambda_i=\left\lfloor \frac{\lambda_{i-1}k_i}{\ell_i}\right \rfloor$
or $\lambda_i=\left\lceil \frac{\lambda_{i-1}k_i}{\ell_i}\right \rceil$.
For each $i$, $0 \leq i \leq n-1$, if
$\lambda_i= \rho \geq 0$ then since $k_{i+1} < \ell_{i+1}$ we have that
$$
0\leq \left\lfloor \frac{\rho k_{i+1}}{\ell_{i+1}}\right \rfloor \leq
\lambda_{i+1}\leq \left\lceil \frac{\rho k_{i+1}}{\ell_{i+1}}\right \rceil \leq \rho~.
$$
Hence,
\begin{equation}
\label{eq:lambda_run}
0\leq \lambda_{i+1}\leq \lambda_i~.
\end{equation}
Similarly, if $\lambda_i \leq 0$ we have that
$$
\lambda_i\leq \lambda_{i+1}\leq 0~.
$$
If $\lambda_0 \geq 0$ then by~(\ref{eq:lambda_run}) we have
$$
\lambda_0=\lambda_n \leq \lambda_{n-1} \leq \cdots \leq \lambda_1 \leq \lambda_0~,
$$
and hence $\lambda_i=\rho$ for each $i$, $1 \leq i \leq n$.
Similarly, we have $\lambda_i=\rho$ for each $i$, $1 \leq i \leq n$ if $\lambda_0 \leq 0$.
If $\rho>0$ then since~$\rho$ is an integer
we have that $x_i= \rho (\ell_i-k_i)\geq \ell_i-k_i$, for each $i$, $1 \leq i \leq n$. Hence,
there is no $j$ such that $x_j<\ell_j-k_j$, which contradicts Lemma~\ref{lem:packing_prop}.
Similarly, if $\rho <0$ then for each~$i$,
$1 \leq i \leq n$, $x_i= \rho (\ell_i-k_i)\leq -(\ell_i-k_i)$,
and hence there is no $r$
such that $x_r>-(\ell_j-k_j)$, which contradicts Lemma~\ref{lem:packing_prop}.
Therefore, $\rho =0$, i.e. for each
$i$, $0 \leq i \leq n$, $\lambda_i=0$, a contradiction. Hence, $\Lambda$ is a lattice packing
of $\R^n$ with $\cS_{L,K}$

Thus, by Lemma~\ref{lem:vol_lat}, $\Lambda$ is a lattice tiling of $\R^n$
with $\cS_{L,K}$.
\end{proof}

\begin{remark}
Note, that the construction (Theorem~\ref{thm:tiling})
is based on lattices covers all
the parameters of integers which are not
covered in Section~\ref{sec:constructSidon}.
\end{remark}

\section{Asymmetric Errors with Limited-magnitude}
\label{sec:asymmetric}

The first application for a tiling of $\Z^n$ with an $n$-dimensional
chair is in construction of codes of length $n$
which correct asymmetric errors with limited-magnitude.

Let $Q = \{ 0,1, \ldots , q-1 \}$ be an alphabet with $q$ letters.
For a word $X=(x_1,x_2,\ldots , x_n) \in Q^n$, the \emph{Hamming
weight} of $X$, $w_H(X)$, is the number of nonzero entries in $X$,
i.e., $w_H(X) = \left| \{ i~:~ x_i \neq 0 \} \right|$.

A \emph{code} $\cC$ of \emph{length} $n$ over the alphabet $Q$
is a subset of~$Q^n$. A vector $\cE
=(\ep_1,\ep_2,\ldots,\ep_n)$ is a \emph{$t$-asymmetric-error with
limited-magnitude $\ell$} if $w_H (\cE) \leq t$ and $0 \leq \ep_i
\leq \ell$ for each $1 \leq i \leq n$. The sphere $\cS(n,t,\ell)$
is the set of all $t\text{-}$asymmetric-errors with limited-magnitude
$\ell$. A code $\cC \subseteq Q^n$ can correct $t$-asymmetric-errors
with limited-magnitude $\ell$ if for any two codewords
$X_1,~X_2$, and any two $t$-asymmetric-errors with
limited-magnitude $\ell$, $\cE_1,~\cE_2$, such that $X_1 + \cE_1
\in Q^n$, $X_2 + \cE_2 \in Q^n$, we have that $X_1 + \cE_1 \neq X_2 + \cE_2$.

The size of the sphere
$\cS(n,t,\ell)$ is easily computed.
\begin{lemma}
$\left| \cS(n,t,\ell) \right| = \sum_{i=0}^t \binom{n}{i} \ell^i$.
\end{lemma}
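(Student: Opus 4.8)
The plan is to count $\cS(n,t,\ell)$ by grouping its elements according to their Hamming weight. First I would note that, by definition, $\cE = (\ep_1,\ep_2,\ldots,\ep_n)$ belongs to $\cS(n,t,\ell)$ precisely when $0 \le \ep_i \le \ell$ for all $i$ and the support $\{\, i : \ep_i \neq 0 \,\}$ has cardinality at most $t$. Consequently $\cS(n,t,\ell)$ is the disjoint union over $i = 0,1,\ldots,t$ of the sets $\cA_i$ consisting of those vectors in $\cS(n,t,\ell)$ of Hamming weight exactly $i$, so that $|\cS(n,t,\ell)| = \sum_{i=0}^{t} |\cA_i|$.

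Next I would evaluate $|\cA_i|$ by exhibiting a bijection. A vector of weight exactly $i$ satisfying the magnitude constraint is uniquely specified by a pair consisting of its support $J \subseteq \{1,2,\ldots,n\}$ with $|J| = i$, together with the tuple of its values on $J$, each of which lies in $\{1,2,\ldots,\ell\}$ (the coordinates outside $J$ being forced to $0$); conversely every such pair determines a distinct element of $\cA_i$. Since there are $\binom{n}{i}$ choices for $J$ and $\ell$ independent choices for the value at each of the $i$ coordinates of $J$, we get $|\cA_i| = \binom{n}{i}\,\ell^{\,i}$.

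Summing over $i$ then yields $|\cS(n,t,\ell)| = \sum_{i=0}^{t}\binom{n}{i}\ell^{\,i}$, which is the claim. There is no real obstacle in this argument; the only point requiring a moment of care is that ``limited-magnitude $\ell$'' allows exactly the $\ell$ nonzero values $1,\ldots,\ell$ on the support --- consistent with the constraint $0 \le \ep_i \le \ell$ --- which is why the contribution of a fixed support of size $i$ is $\ell^{\,i}$ rather than $(\ell+1)^{\,i}$.
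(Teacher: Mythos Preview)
Your argument is correct and is exactly the standard weight-stratification count one would expect here. The paper itself does not supply a proof of this lemma at all---it simply states that the size ``is easily computed''---so your write-up in fact fills in what the authors left implicit.
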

\begin{cor}
$\left| \cS(n,n-1,\ell ) \right| = (\ell +1)^n - \ell^n$.
\end{cor}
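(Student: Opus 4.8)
The plan is to obtain the identity directly from the preceding lemma by specializing $t=n-1$ and then invoking the binomial theorem. First I would apply that lemma with $t=n-1$, which gives
$$|\cS(n,n-1,\ell)| = \sum_{i=0}^{n-1}\binom{n}{i}\ell^i.$$
Next I would recall the binomial expansion $(\ell+1)^n = \sum_{i=0}^{n}\binom{n}{i}\ell^i$ and observe that the sum above is exactly this expansion with its top term, $\binom{n}{n}\ell^n=\ell^n$, deleted. Subtracting that single term yields $|\cS(n,n-1,\ell)| = (\ell+1)^n - \ell^n$, which is the claim.

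There is essentially no obstacle here: the only point that needs to be noticed is that the truncated sum $\sum_{i=0}^{n-1}$ differs from the full binomial sum $\sum_{i=0}^{n}$ by precisely the term indexed by $i=n$, and that $\binom{n}{n}=1$. So the proof is a two-line calculation and does not require induction or any of the machinery of the earlier sections.

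It is worth noting, although it is not needed for the proof, why this computation is recorded as a corollary: the value $(\ell+1)^n-\ell^n$ coincides with $|\cS_{L,K}|$ for $L=(\ell+1,\ell+1,\ldots,\ell+1)$ and $K=(\ell,\ell,\ldots,\ell)$, by the volume lemma of Section~\ref{sec:basic}. This numerical match is what will later allow a lattice tiling of $\Z^n$ by such an $n$-dimensional chair to be reinterpreted as a perfect code correcting $(n-1)$-asymmetric errors with limited-magnitude $\ell$, so the corollary is really a bookkeeping step setting up that correspondence.
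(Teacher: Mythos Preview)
Your proof is correct and is exactly the argument the paper has in mind: the corollary is stated without proof immediately after the lemma $|\cS(n,t,\ell)|=\sum_{i=0}^t\binom{n}{i}\ell^i$, and the intended justification is precisely to set $t=n-1$ and compare with the binomial expansion of $(\ell+1)^n$. Your additional remark linking the value to $|\cS_{L,K}|$ for $L=(\ell+1,\ldots,\ell+1)$, $K=(\ell,\ldots,\ell)$ is also spot on and anticipates the discussion in Section~\ref{sec:asymmetric}.
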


For simplicity it is more convenient to consider the code~$\cC$ as
a subset of $\Z_q^n$, where all the additions are performed
modulo~$q$. Such a code $\cC$ can be viewed also as a subset of
$\Z^n$ formed by the set $\{ X + qY ~:~ X \in \cC,~ Y \in \Z^n
\}$. This code is an \emph{extension}, from
$\Z_q^n$ to $\Z^n$, of the code $\cC$. Note, in this code there is
a wrap around (of the alphabet) which does not exist if the alphabet
is $Q$, as in the previous code.

A linear code $\cC$, over $\Z_q^n$, which corrects $t$-asymmetric-errors
with limited-magnitude $\ell$, viewed as a subset of $\Z^n$,
is equivalent to an integer lattice packing of $\Z^n$ with the shape
$\cS(n,t,\ell)$. Therefore, we will call this lattice a
\emph{lattice code}.

Let $\cA(n,t,\ell)$ denote the set of lattice codes in $\Z^n$
which correct $t$-asymmetric-errors with limited-magnitude $\ell$.
A code $\cL \in \cA(n,t,\ell)$ is called \emph{perfect} if it
forms a lattice tiling with the shape $\cS(n,t,\ell)$.
By Corollary \ref{cor:tilingGroup} we have
\begin{cor}
\label{cor:per_lattice} A perfect lattice code $\cL\in
\cA(n,t,\ell)$ exists if and only if there exists an Abelian group
$G$ of order $|\cS(n,t,\ell)|$ such that $\cS(n,t,\ell)$ splits $G$.
\end{cor}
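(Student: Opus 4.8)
The plan is to observe that Corollary~\ref{cor:per_lattice} is an immediate specialization of Corollary~\ref{cor:tilingGroup}, so the whole proof amounts to a translation between the coding-theoretic and the combinatorial vocabulary. First I would unwind the definitions given just above the statement: a code $\cL\in\cA(n,t,\ell)$ is, by definition, an integer lattice packing of $\Z^n$ with the shape $\cS(n,t,\ell)$, and it is called \emph{perfect} exactly when this packing is in fact a lattice tiling of $\Z^n$ with $\cS(n,t,\ell)$. Hence the existence of a perfect lattice code in $\cA(n,t,\ell)$ is literally the same assertion as the existence of a lattice tiling of $\Z^n$ with $\cS(n,t,\ell)$.

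Next I would check the single hypothesis needed to invoke Corollary~\ref{cor:tilingGroup}, namely that $\cS(n,t,\ell)\subseteq\Z^n$: this holds because every $t$-asymmetric error with limited-magnitude $\ell$ is a vector all of whose coordinates lie in $\{0,1,\ldots,\ell\}$, so $\cS(n,t,\ell)$ is a genuine discrete shape in $\Z^n$. Applying Corollary~\ref{cor:tilingGroup} with $\cS=\cS(n,t,\ell)$ then gives that a lattice tiling of $\Z^n$ with $\cS(n,t,\ell)$ exists if and only if there is an Abelian group $G$ of order $|\cS(n,t,\ell)|$ such that $\cS(n,t,\ell)$ splits $G$, which is precisely the claimed equivalence. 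If one wishes, the value $|\cS(n,t,\ell)|=\sum_{i=0}^t\binom{n}{i}\ell^i$ from the earlier lemma can be substituted in, but that explicit count plays no role in the argument.

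There is essentially no real obstacle here; the only point worth spelling out carefully is the identification of a \emph{perfect lattice code correcting $t$-asymmetric errors with limited-magnitude $\ell$} with a \emph{lattice tiling of $\Z^n$ by $\cS(n,t,\ell)$}, after which Corollary~\ref{cor:tilingGroup} does all the work.
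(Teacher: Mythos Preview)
Your proposal is correct and matches the paper's approach exactly: the paper states the corollary immediately after the sentence ``By Corollary~\ref{cor:tilingGroup} we have'', with no further argument. Your only addition is to make the dictionary between ``perfect lattice code in $\cA(n,t,\ell)$'' and ``lattice tiling of $\Z^n$ with $\cS(n,t,\ell)$'' explicit, which is just unpacking the definitions given in the paragraph preceding the corollary.
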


A code $\cL\in \cA(n,t,\ell)$ is formed as an extension of a code
over $\Z_q^n$. Assume we want to form
a code $\cC \subseteq \Sigma^n$, where $\Sigma \deff \{
0,1,\ldots , \sigma -1 \}$, which corrects $t$ asymmetric errors
with limited-magnitude $\ell$. Assume that a construction with
a large linear code $\cC \subset \Sigma^n$ does not exist.
One can take a lattice code $\cL\in
\cA(n,t,\ell)$ over an alphabet with $q$ letters $q > \sigma$.
Then a code over the alphabet $\Sigma$ is formed by $\cC \deff \cL  \cap
\Sigma^n$. Note that the code $\cC$ is usually not linear.
This is a simple construction which always works.
Of course, we expect that there will be many alphabets
in which better constructions can be found.

There exists a perfect lattice code $\cL \in
\cA(n,t,\ell)$ for various parameters with
$t=1$~\cite{KBE11,KLNY}. Such codes also
exist for $t=n$ and all $\ell \geq 1$ and for the parameters of
the Golay codes and the binary repetition codes of odd length~\cite{McSl77}.

The existence of perfect codes which correct ($n-1$)-asymmetric-errors
with limited magnitude $\ell$ was proved in~\cite{KLNY}. The related
sphere $\cS(n,n-1,\ell)$ is an $n$-dimensional chair
$\cS_{L,K}$, where $L=(\ell+1,\ell+1,\ldots,\ell+1)$ and
$K=(\ell,\ell,\ldots,\ell)$.
Sections~\ref{sec:constructSidon} and~\ref{sec:constructLattice}
provide constructions for such codes with simpler description
and simpler proofs that these codes are such perfect codes.

In fact, the constructions in these sections provide tilings
of many other related shapes. More than that, there might be scenarios
in which different flash cells can have different limited magnitude.
For example, if for some cells we want to increase the number
of charge levels. In this case we might need a code which correct
asymmetric errors with different limited magnitudes for different cells.
Assume that for the $i$-th cell the limited magnitude is $\ell_i$.
Our lattice tiling with $\cS_{L,K}$,
$L=(\ell_1 +1,\ell_2 +1, \ldots , \ell_n +1) \in \Z^n$,
$K=(\ell_1,\ell_2,\ldots,\ell_n)$, produces the required
perfect code for this scenario.

\section{Nonexistence of some Perfect Codes}
\label{sec:nonexist}

Next, we ask whether perfect codes, which correct asymmetric errors with limited-magnitude,
exist for $t=n-2$. Unfortunately, such
codes cannot exist. The proof for this claim is the goal of this section.
Most of the proof is devoted to the case in which the limited magnitude
$\ell$ is equal to one. We conclude the section with a proof for
$\ell >1$.

For a word $X=(x_1,x_2,\ldots,x_n) \in \Z^n$, we define
$$
N_+(X)= |\{x_i~|~x_i>0\}|,~~~~~~~ N_-(X)=|\{x_i~|~x_i<0\} |.
$$
We say that a codeword $X \in\cL$, $\cL \in \cA(n,t,\ell)$, \emph{covers}
a word $Y \in \Z^n$ if there exists an element $\cE \in \cS(n,t,\ell)$
such that $Y=X+\cE$.

\begin{lemma}
\label{lem:pacSn-2}
Let $\cL \in \cA(n,t,\ell)$, and assume that
there exists $X=(x_1,x_2,\ldots,x_n) \in \cL$, $X\neq \mathbf{0}$,
such that $|x_i|\leq \ell$, for every $i$,
$1\leq i\leq n$. Then, $N_+(X)\geq t+1$ or $N_-(X)\geq t+1$.
\end{lemma}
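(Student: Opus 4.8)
The plan is to argue by contradiction using the tiling (hence packing) property of $\cL$. Suppose $X=(x_1,\dots,x_n)\in\cL$, $X\neq\mathbf{0}$, with $|x_i|\le\ell$ for all $i$, but $N_+(X)\le t$ and $N_-(X)\le t$. I want to produce a word $Y\in\Z^n$ that is covered both by $\mathbf{0}$ and by $X$, contradicting the fact that $\cL$ is a lattice packing with the shape $\cS(n,t,\ell)$ (so no nonzero lattice point $X$ can satisfy $\cS(n,t,\ell)\cap(X+\cS(n,t,\ell))\neq\varnothing$). Concretely, I would split the coordinates of $X$ into the positive set $P=\{i:x_i>0\}$, the negative set $M=\{i:x_i<0\}$, and the zero set, and define the error vector $\cE_0$ by $(\cE_0)_i=x_i$ for $i\in P$ and $(\cE_0)_i=0$ otherwise, and $\cE_X$ by $(\cE_X)_i=-x_i=|x_i|$ for $i\in M$ and $(\cE_X)_i=0$ otherwise.

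The key verification is that $\cE_0,\cE_X\in\cS(n,t,\ell)$. For the magnitude constraint this is immediate: each nonzero entry of $\cE_0$ equals some $x_i>0$ with $x_i\le\ell$, and each nonzero entry of $\cE_X$ equals $|x_i|\le\ell$; all entries are nonnegative by construction. For the Hamming-weight constraint, $w_H(\cE_0)=|P|=N_+(X)\le t$ and $w_H(\cE_X)=|M|=N_-(X)\le t$, which is exactly where the hypothesis $N_+(X)\le t$ and $N_-(X)\le t$ (the negation of the claim) is used. Then set $Y=\mathbf{0}+\cE_0=\cE_0$; I claim $Y=X+\cE_X$ as well. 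Checking coordinatewise: for $i\in P$, $Y_i=x_i$ and $(X+\cE_X)_i=x_i+0=x_i$; for $i\in M$, $Y_i=0$ and $(X+\cE_X)_i=x_i+(-x_i)=0$; for $i$ with $x_i=0$, both are $0$. Hence $Y$ is covered by the two distinct codewords $\mathbf{0}$ and $X$ via errors in $\cS(n,t,\ell)$, so $\cS(n,t,\ell)\cap(X+\cS(n,t,\ell))\neq\varnothing$, contradicting that $\cL$ is a packing with shape $\cS(n,t,\ell)$. Therefore $N_+(X)\ge t+1$ or $N_-(X)\ge t+1$.

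There is no real obstacle here; the only thing to be careful about is the bookkeeping of signs — one must make sure the error added to $\mathbf{0}$ is the nonnegative "positive part" of $X$ and the error added to $X$ is the nonnegative "negative part" of $-X$, so that both lie in the sphere $\cS(n,t,\ell)$ whose entries are required to be in $[0,\ell]$. The magnitude bound $|x_i|\le\ell$ in the hypothesis is precisely what makes both constructed errors respect the limited-magnitude constraint, and the packing property of the lattice code (not the full tiling property) is all that is needed for the contradiction. An entirely equivalent phrasing, which I would perhaps prefer for brevity, is to invoke directly that for a packing the support split of $X$ into positive and negative parts yields $X = \cE_X' - \cE_0'$ with both parts having nonnegative entries bounded by $\ell$; feasibility of this decomposition as a difference of two sphere elements forces $w_H$ of at least one part to exceed $t$.
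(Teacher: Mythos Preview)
Your proof is correct and essentially identical to the paper's: both argue by contradiction, decompose $X$ into its positive part $\cE^+=\max\{x_i,0\}$ and negative part $\cE^-=\max\{-x_i,0\}$ (your $\cE_0$ and $\cE_X$), observe that the hypotheses force both to lie in $\cS(n,t,\ell)$, and conclude from $X+\cE^-=\cE^+$ that $\cS(n,t,\ell)\cap(X+\cS(n,t,\ell))\neq\varnothing$. The only difference is that you spell out the coordinatewise verification and the sign bookkeeping a bit more explicitly.
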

\begin{proof}
Let $X=(x_1,x_2,\ldots,x_n) \in \cL$, $X\neq \mathbf{0}$,
such that $|x_i|\leq \ell$, for every $i$,
$1\leq i\leq n$. Assume to the contrary that
$N_+(X) \leq t$ and $N_-(X)\leq t$.
Let $\cE^+ = (\varepsilon^+_1,\varepsilon^+_2,\ldots,\varepsilon^+_n)$
where $\varepsilon^+_i=\max\{x_i,0\}$ and
$\cE^- = (\varepsilon^-_1,\varepsilon^-_2,\ldots,\varepsilon^-_n)$ where $\varepsilon^-_i=\max\{-x_i,0\}$.
Clearly, $\cE^+,~\cE^-\in \cS(n,t,\ell)$ and $X+\cE^-=\cE^+$.

Therefore, $\cS(n,t,\ell)\cap (X+\cS(n,t,\ell))\neq \varnothing$,
which contradicts the fact that $\cL \in \cA(n,t,\ell)$. Thus,
$N_+(X)\geq t+1$ or $N_-(X)\geq t+1$.
\end{proof}

\begin{lemma}
\label{lem:cover1}
Let $\cL\in \cA(n,n-2,\ell)$ be a lattice code.
The word $\mathbf{1} \in \Z^n$, the all-one vector,
can be covered only by a codeword of the form
$\mathbf{1}- \lambda \cdot \mathbf{e}_i$, for some $i$, $1\leq i\leq n$;
where $\lambda$ is an integer, $0\leq \lambda \leq \ell$.
\end{lemma}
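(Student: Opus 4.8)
The plan is to argue by contradiction, using Lemma~\ref{lem:pacSn-2} at $t=n-2$. Suppose $X=(x_1,x_2,\ldots,x_n)\in\cL$ covers $\mathbf{1}$, say $\mathbf{1}=X+\cE$ with $\cE=(\ep_1,\ep_2,\ldots,\ep_n)\in\cS(n,n-2,\ell)$; thus each $\ep_i$ is an integer, $0\le\ep_i\le\ell$, $w_H(\cE)\le n-2$, and $x_i=1-\ep_i$ for all $i$. If $w_H(\cE)\le 1$ then either $\cE=\mathbf{0}$ and $X=\mathbf{1}$, or $\cE=\lambda\mathbf{e}_i$ for some $i$ and some integer $\lambda$ with $1\le\lambda\le\ell$ and $X=\mathbf{1}-\lambda\mathbf{e}_i$; in both cases $X$ has the asserted form. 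So it suffices to rule out $w_H(\cE)\ge 2$, which I assume from now on.

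Under this assumption I would first check the hypotheses of Lemma~\ref{lem:pacSn-2}: $X\neq\mathbf{0}$, since $\cE\neq\mathbf{1}$ (as $w_H(\mathbf{1})=n>n-2$), and $|x_i|\le\ell$ for every $i$, because $\ep_i=0$ gives $|x_i|=1\le\ell$ while $\ep_i\ge 1$ gives $|x_i|=\ep_i-1\le\ell-1$. Next I would bound $N_+(X)$ and $N_-(X)$. Since each $x_i=1-\ep_i$ is an integer not exceeding $1$, a coordinate is positive precisely when $x_i=1$, i.e. precisely when $\ep_i=0$; hence $N_+(X)=n-w_H(\cE)\le n-2$. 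A coordinate with $x_i<0$ has $\ep_i=1-x_i\neq 0$, so it lies among the nonzero coordinates of $\cE$; hence $N_-(X)\le w_H(\cE)\le n-2$.

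Now both $N_+(X)\le n-2$ and $N_-(X)\le n-2$, which directly contradicts Lemma~\ref{lem:pacSn-2} applied with $t=n-2$ (that lemma forces $N_+(X)\ge n-1$ or $N_-(X)\ge n-1$ for any nonzero $X\in\cL$ with $|x_i|\le\ell$). This contradiction shows $w_H(\cE)\le 1$, and the opening paragraph then gives the claimed form of $X$. There is no real obstacle in the argument; the only points needing a little care are the verification that $|x_i|\le\ell$ and the two counting relations $N_+(X)=n-w_H(\cE)$ and $N_-(X)\le w_H(\cE)$, which are exactly what Lemma~\ref{lem:pacSn-2} needs in order to apply.
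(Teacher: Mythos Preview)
Your proof is correct and follows essentially the same approach as the paper: both arguments use Lemma~\ref{lem:pacSn-2} with $t=n-2$ together with the observation that $x_i=1-\ep_i\in\{1-\ell,\ldots,1\}$, so positive coordinates of $X$ correspond exactly to zero coordinates of $\cE$. The only cosmetic difference is that you phrase it as a contradiction on $w_H(\cE)\ge 2$, whereas the paper argues directly that $N_-(X)\le n-2$ forces $N_+(X)\ge n-1$ and hence at least $n-1$ coordinates of $X$ equal~$1$.
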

\begin{proof} Assume that $X \in \cL$ is the
codeword that covers~$\mathbf{1}$.
Then there exists $\cE=(\varepsilon_1,\varepsilon_2,\ldots,\varepsilon_n)\in \cS(n,n-2,\ell)$ such that $X+\cE=\mathbf{1}$,
i.e. $x_i=1-\varepsilon_i$ and therefore, $1-\ell\leq x_i\leq 1$
for each $i$, $1\leq i\leq n$. Since $w_H(\cE)\leq n-2$ it follows that there
are at least two entries which are equal to \emph{one} in $X$.
By Lemma~\ref{lem:pacSn-2}, it follows that $N_{+}(X) \geq n-1$.
Hence, there are at least $n-1$ entries of $X$
which are equal to \emph{one}. Therefore,
$X=\mathbf{1}- \lambda \mathbf{e}_i$ for some $i$, $1\leq i\leq n$;
where $\lambda$ is an integer, $0\leq \lambda \leq \ell$.
\end{proof}

\begin{lemma}\label{lem:cover2}
Let $\cL\in \cA(n,n-2,\ell)$ be a lattice code.
For every $j$, $1\leq j\leq n$, the word
$W_j=\mathbf{1}-\mathbf{e}_j$ can be covered only by a codeword of the form
$\mathbf{1}- \lambda \mathbf{e}_j$,
where $\lambda$ is an integer, $1\leq \lambda \leq \ell+1$.
\end{lemma}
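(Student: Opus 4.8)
The plan is to follow the template of the proof of Lemma~\ref{lem:cover1}, but now tracking the full sign pattern of a codeword that covers $W_j = \mathbf{1} - \mathbf{e}_j$. First I would take a codeword $X = (x_1,\ldots,x_n) \in \cL$ that covers $W_j$, so that $X + \cE = W_j$ for some $\cE = (\varepsilon_1,\ldots,\varepsilon_n) \in \cS(n,n-2,\ell)$. Reading this off coordinatewise gives $x_i = 1 - \varepsilon_i$ for $i \neq j$ and $x_j = -\varepsilon_j$; since $0 \le \varepsilon_i \le \ell$ for every $i$, this yields $|x_i| \le \ell$ for all $i$, and also $X \neq \mathbf{0}$ (otherwise $\cE = W_j$ would have Hamming weight $n-1 > n-2$). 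Hence Lemma~\ref{lem:pacSn-2}, applied with $t = n-2$, gives that $N_+(X) \ge n-1$ or $N_-(X) \ge n-1$.

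The heart of the argument is to rule out $N_-(X) \ge n-1$. For this I would note that if $x_i < 0$ then $\varepsilon_i = (W_j)_i - x_i \ge -x_i > 0$, because every coordinate of $W_j$ is non-negative; thus every negative coordinate of $X$ occurs at a nonzero coordinate of $\cE$, so $N_-(X) \le w_H(\cE) \le n-2$. Therefore $N_+(X) \ge n-1$. Since $x_j = -\varepsilon_j \le 0$ is not positive, the (at least $n-1$) positive coordinates of $X$ must all lie among the $n-1$ indices different from $j$, so in fact $x_i > 0$ for every $i \neq j$. For such $i$ the relation $x_i = 1 - \varepsilon_i > 0$ with $\varepsilon_i$ a non-negative integer forces $\varepsilon_i = 0$, i.e.\ $x_i = 1$. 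Hence $X$ agrees with $\mathbf{1}$ in every coordinate except $j$, so $X = \mathbf{1} - (1+\varepsilon_j)\mathbf{e}_j = \mathbf{1} - \lambda \mathbf{e}_j$ with $\lambda = 1 + \varepsilon_j$, and $0 \le \varepsilon_j \le \ell$ gives $1 \le \lambda \le \ell+1$, as claimed.

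The step I expect to be the real obstacle is precisely the exclusion of $N_-(X) \ge n-1$. In Lemma~\ref{lem:cover1} one covers the all-one word, which immediately forces at least two $+1$-entries of $X$ and hence $N_+(X) \ge 2$, so the $N_-$-branch of Lemma~\ref{lem:pacSn-2} is cheap to kill; here the single zero coordinate of $W_j$ can be ``absorbed'' by a vanishing coordinate of $\cE$, so one only gets $N_+(X) \ge 1$ directly. The substitute is the bookkeeping inequality $N_-(X) \le w_H(\cE) \le n-2$, which uses nothing about $\cE$ beyond its weight bound and the non-negativity of $W_j$; after that the remaining integrality and counting are routine. Note that no property of the lattice $\cL$ is used beyond Lemma~\ref{lem:pacSn-2}, and the argument is valid for all $n \ge 2$ (for $n = 2$ one has $\cS(2,0,\ell) = \{\mathbf{0}\}$ and the statement is immediate).
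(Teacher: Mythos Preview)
Your proof is correct and follows essentially the same route as the paper's: bound the coordinates of $X$ by $|x_i|\le\ell$, observe that $N_-(X)\le w_H(\cE)\le n-2$, apply Lemma~\ref{lem:pacSn-2} to force $N_+(X)\ge n-1$, and conclude that the $n-1$ positive entries (necessarily those with $i\neq j$) all equal~$1$. If anything, your write-up is slightly more careful than the paper's, since you explicitly verify $X\neq\mathbf{0}$ before invoking Lemma~\ref{lem:pacSn-2}.
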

\begin{proof}
Assume that $X\in \cL$ is a codeword that covers $W_j$.
Then there exists
$\cE=(\varepsilon_1,\varepsilon_2,\ldots,\varepsilon_n)\in \cS(n,n-2,\ell)$ such that
$X+\cE=W_j$. Clearly, $x_j=-\varepsilon_j\leq 0$, and for each $i\neq j$,
$x_i=1-\varepsilon_i$ and therefore $-\ell\leq x_i\leq 1$ for each $i$, $1 \leq i \leq n$.
Since $w_H(\cE)\leq n-2$ it follows that there are at most $n-2$
negative coordinates in~$X$. Therefore,
by Lemma~\ref{lem:pacSn-2}, it follows that $N_{+}(X) \geq n-1$.
Hence, there are at least $n-1$ coordinates of $X$ which
are equal to \emph{one}. Thus, $X=\mathbf{1}-\lambda \mathbf{e}_j$,
where $1\leq \lambda \leq \ell+1$.
\end{proof}

\begin{lemma}
\label{lem:perfect_cond}
If there exists a perfect lattice code in
${\cA(n,n-2,\ell)}$ then $|\cS(n,n-2,\ell)|$ divides
${(\ell+1)^{n-2} (\ell+1+\lambda (n-2-\ell))}$ for some
integer $\lambda$, $0\leq \lambda \leq \ell$.
\end{lemma}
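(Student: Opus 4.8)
The plan is to exhibit, inside an arbitrary perfect lattice code $\cL\in\cA(n,n-2,\ell)$, an explicit full-rank sublattice $M$ whose index in $\Z^n$ we can compute, and then to use $M\subseteq\cL\subseteq\Z^n$ together with $[\Z^n:\cL]=V(\cL)=|\cS(n,n-2,\ell)|$ to conclude that $|\cS(n,n-2,\ell)|$ divides $[\Z^n:M]$. Throughout we may assume $n\ge3$, since for $n=2$ the sphere is a single point and the claim is vacuous.

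The sublattice $M$ will be generated by the codewords covering $\mathbf{1}$ and the words $W_j=\mathbf{1}-\mathbf{e}_j$. First, since $\cL$ is a tiling, $\mathbf{1}$ is covered by a \emph{unique} codeword, and by Lemma~\ref{lem:cover1} it equals $\mathbf{1}-\lambda\mathbf{e}_{i_0}$ for some $i_0$ and some integer $0\le\lambda\le\ell$; permuting coordinates (which preserves both $\cA(n,n-2,\ell)$ and $\mathbf{1}$, and does not change $|\cS(n,n-2,\ell)|$) I may assume $i_0=1$, so $v_1:=\mathbf{1}-\lambda\mathbf{e}_1\in\cL$. For each $j$ with $2\le j\le n$, the word $W_j$ is also covered by a unique codeword, which by Lemma~\ref{lem:cover2} has the form $\mathbf{1}-\mu_j\mathbf{e}_j$ with $1\le\mu_j\le\ell+1$. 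The crucial step is to show $\mu_j=\ell+1$: if $\mu_j\le\ell$ then $\mu_j\mathbf{e}_j\in\cS(n,n-2,\ell)$ (its magnitude is $\le\ell$ and its Hamming weight is $1\le n-2$), so $(\mathbf{1}-\mu_j\mathbf{e}_j)+\mu_j\mathbf{e}_j=\mathbf{1}$ exhibits $\mathbf{1}-\mu_j\mathbf{e}_j$ as a second codeword covering $\mathbf{1}$, distinct from $v_1$ since they disagree in coordinate $j$ — contradicting the tiling property. Hence $v_j:=\mathbf{1}-(\ell+1)\mathbf{e}_j\in\cL$ for $2\le j\le n$.

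Now let $M$ be the lattice generated by $v_1,\dots,v_n$. Its generator matrix is $\mathbf{B}=J+D$, where $J$ is the $n\times n$ all-ones matrix and $D=\operatorname{diag}(-\lambda,-(\ell+1),\dots,-(\ell+1))$. Using the rank-one structure of $J$ one computes $\det\mathbf{B}=(-1)^{n-1}(\ell+1)^{n-2}\bigl((\ell+1)+\lambda(n-2-\ell)\bigr)$. If $\det\mathbf{B}\ne0$ then $[\Z^n:M]=|\det\mathbf{B}|$ is finite, and since $M\subseteq\cL\subseteq\Z^n$ the index $|\cS(n,n-2,\ell)|=[\Z^n:\cL]$ divides $(\ell+1)^{n-2}\bigl((\ell+1)+\lambda(n-2-\ell)\bigr)$ (divisibility being insensitive to the sign of the right-hand factor), which is exactly the assertion; if $\det\mathbf{B}=0$ then $(\ell+1)+\lambda(n-2-\ell)=0$ and the divisibility is trivial.

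The determinant evaluation and the sign bookkeeping are routine; the one step that needs care is the uniqueness argument forcing each $\mu_j=\ell+1$, because without it the index of $M$ would involve the unknown quantities $\mu_2,\dots,\mu_n$ and the clean formula would be lost. I should also record that the $J+D$ computation already subsumes the case $\lambda=0$ (where $\mathbf{1}\in\cL$ and $D$ is singular), so no separate case analysis is needed there.
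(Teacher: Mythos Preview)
Your proof is correct and follows essentially the same approach as the paper: identify the codeword covering $\mathbf{1}$ via Lemma~\ref{lem:cover1}, force the codewords covering the $W_j$ to have the form $\mathbf{1}-(\ell+1)\mathbf{e}_j$ using uniqueness of covering (the paper phrases this identically), take the sublattice they generate, and compute its volume. The only cosmetic differences are your choice of the first rather than the last coordinate as the distinguished one, your use of the rank-one structure $J+D$ to evaluate the determinant instead of the paper's row/column operations, and your explicit remark on the degenerate case $\det\mathbf{B}=0$, which the paper leaves implicit.
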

\begin{proof}
Let $\cL\in \cA(n,n-2,\ell)$ be a perfect lattice code.
By Lemma \ref{lem:cover1} and w.l.o.g we can assume that $\mathbf{1}$
is covered by the codeword $X=\mathbf{1}- \lambda \mathbf{e}_n$,
where $0\leq \lambda \leq \ell$. Combining this with
Lemma \ref{lem:cover2} we deduce that for all $i$, $1\leq i\leq n-1$,
the word $W_i=\mathbf{1}-\mathbf{e}_i$ is covered by the codeword
$Y_i=\mathbf{1}-(\ell+1)\cdot \mathbf{e}_i$
($Y_i$ cannot be equal $\mathbf{1}- \alpha \mathbf{e}_i$,
$1 \leq \alpha \leq \ell$ since it would cover $\mathbf{1}$
which is already covered by $X$). We have $n$ distinct
codewords in $\cL$, and since $\cL$ is a lattice, the lattice $\cL'$
generated by the set $\{X,Y_1,Y_2,\ldots,Y_{n-1}\}$ is a sublattice of $\cL$, and
therefore $V(\cL)=|\cS(n,n-2,\ell)|$ divides $V(\cL')$.
Let $\mathbf{G}$ be the matrix whose rows are $X,Y_1,Y_2,\ldots,Y_{n-1}$.

$$
\det\mathbf{G}=\left|\begin{array}{cccccc}
1 & 1 & 1& \ldots & 1& 1-\lambda \\
-\ell & 1 &  1& \ldots  &1& 1 \\
1 & -\ell & 1& \ldots & 1&1\\
1 & 1& -\ell & \ddots &1 & 1 \\
\vdots &\vdots & \ddots& \ddots & \vdots & \vdots \\
1 &1 & 1 & \ldots  &-\ell &1
\end{array}\right|
$$
Subtracting the first row from every other row, we obtain the
determinant
$$
\left|\begin{array}{cccccc}
1 & 1 & 1& \ldots &1& 1-\lambda \\
-(\ell+1) & 0 &  0&\ldots &0& \lambda\\
0 & -(\ell+1) & 0&\ldots &0 & \lambda\\
0 &0 & -(\ell+1)& \ddots & 0& \lambda \\
\vdots & \vdots&\ddots &\ddots&\vdots & \vdots \\
0 &0 &0 &\ldots &-(\ell+1) & \lambda
 \end{array}\right|.
$$
Subtracting the first column from all the other columns, except from the
last one, we obtain the determinant
$$
\left|\begin{array}{cccccc}
1 & 0 & 0& \ldots &0& 1-\lambda \\
-(\ell+1) & \ell+1 &  \ell+1&\ldots &\ell+1& \lambda\\
0 & -(\ell+1) & 0&\ldots &0 & \lambda\\
0 &0 & -(\ell+1)& \ddots & 0& \lambda \\
\vdots & \vdots&\ddots &\ddots&\vdots & \vdots \\
0 &0 &0 &\ldots &-(\ell+1) & \lambda
\end{array}\right|.
$$
Finally, replacing the second row by the sum of all the rows,
except for the first one, we obtain the determinant
$$
\left|\begin{array}{cccccc}
1 & 0 & 0& \ldots &0& 1-\lambda \\
-(\ell+1) & 0 &  0&\ldots &0 & \lambda (n-1)\\
0 & -(\ell+1) & 0&\ldots &0 & \lambda\\
0 &0 & -(\ell+1)& \ddots & 0& \lambda \\
\vdots & \vdots&\ddots &\ddots&\vdots & \vdots \\
0 &0 &0 &\ldots &-(\ell+1) & \lambda
 \end{array}\right|.
$$
Now, it is easy to verify that
$V(\cL')= |\det(\mathbf{G})|=|{\lambda (n-1)} {(\ell+1)^{n-2}}+(1-\lambda ){(\ell+1)^{n-1}}|
=|{(\ell+1)^{n-2}} {(\ell+1+ \lambda (n-2-\ell))}|$.
\end{proof}
\begin{theorem}
\label{thm:NoPerelleq1}
There are no perfect lattice codes in ${\cA(n,n-2,1)}$ for all
$n \geq 4$.
\end{theorem}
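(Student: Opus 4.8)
The plan is to feed Lemma~\ref{lem:perfect_cond} into a short number-theoretic argument. First I would record that, with $\ell=1$, the sphere size is $|\cS(n,n-2,1)|=\sum_{i=0}^{n-2}\binom{n}{i}=2^n-n-1$. Assume for contradiction that a perfect lattice code in $\cA(n,n-2,1)$ exists. Then Lemma~\ref{lem:perfect_cond} says $2^n-n-1$ divides $2^{n-2}\bigl(2+\lambda(n-3)\bigr)$ for some integer $\lambda$ with $0\le\lambda\le 1$; that is, either $2^n-n-1\mid 2^{n-1}$ (the case $\lambda=0$) or $2^n-n-1\mid 2^{n-2}(n-1)$ (the case $\lambda=1$). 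The entire proof then consists of ruling out both divisibilities for every $n\ge 4$.

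The case $\lambda=0$ is immediate: every divisor of $2^{n-1}$ is at most $2^{n-1}$, while $2^n-n-1>2^{n-1}$ exactly when $2^{n-1}>n+1$, which holds for all $n\ge 4$. So $\lambda=0$ cannot occur.

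For $\lambda=1$ I would split on the parity of $n$. If $n$ is even then $n+1$ is odd, hence $2^n-n-1$ is odd, so an odd divisor of $2^{n-2}(n-1)$ must already divide $n-1$; but $2^n-n-1>n-1$ for all $n\ge 4$ (since $2^n>2n$), a contradiction. If $n$ is odd (so $n\ge 5$), then $n+1$ is even and, because $2^n\nmid n+1$ for $n\ge 2$, we have $v_2(2^n-n-1)=v_2\bigl(2^n-(n+1)\bigr)=v_2(n+1)$. Consequently the odd part of $2^n-n-1$ equals $(2^n-n-1)/2^{v_2(n+1)}\ge (2^n-n-1)/(n+1)$. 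This odd part must divide $2^{n-2}(n-1)$ and therefore must divide the odd part of $n-1$, which is at most $n-1$; but $(2^n-n-1)/(n+1)>n-1$ is equivalent to $2^n>n^2+n$, valid for all $n\ge 5$. This contradiction finishes the argument, since $n$ even and $n$ odd together exhaust $n\ge 4$.

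The size inequalities $2^{n-1}>n+1$, $2^n>2n$, and $2^n>n^2+n$ over the stated ranges are routine (check the smallest value, then induct). The only step that needs genuine care, and which I view as the crux, is the valuation identity $v_2(2^n-n-1)=v_2(n+1)$ for odd $n$: it rests on the strict inequality $v_2(2^n)=n>v_2(n+1)$, i.e.\ on $2^n\nmid n+1$, clear for $n\ge 2$. Everything else is bookkeeping around Lemma~\ref{lem:perfect_cond}.
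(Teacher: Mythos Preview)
Your proof is correct and follows the same overall strategy as the paper: invoke Lemma~\ref{lem:perfect_cond} with $\ell=1$ to reduce to showing $2^n-n-1\nmid 2^{n-1}$ and $2^n-n-1\nmid 2^{n-2}(n-1)$, dispose of $\lambda=0$ by size, and handle $\lambda=1$ by isolating the odd part of $2^n-n-1$ and showing it is too large to divide $n-1$.

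The one organizational difference is worth noting. For $\lambda=1$ the paper sets $2^r=\gcd(2^n-n-1,2^{n-2})$, bounds $r\le\log_2(n+1)$, and reduces to the inequality $2^n>2n(n+1)$, which only kicks in at $n\ge 7$; the cases $n=4,5,6$ are then verified by hand. Your parity split is a bit sharper: for even $n$ the number $2^n-n-1$ is odd and must divide $n-1$ directly, while for odd $n$ your valuation identity $v_2(2^n-n-1)=v_2(n+1)$ gives the odd part exactly and leads to the weaker requirement $2^n>n(n+1)$, already valid at $n=5$. So your version covers all $n\ge 4$ uniformly without any residual case-check. The underlying idea is the same; your bookkeeping is just a little tighter.
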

\begin{proof}
By Lemma~\ref{lem:perfect_cond}, it is sufficient
to show that ${|\cS(n,n-2,1)|=2^n-n-1}$ does not divide
${2^{n-2} (2+ \lambda (n-3))}$, for $\lambda=0,1$.

If $\lambda =0$ then we have to show that
$2^n-n-1$ does not divide $2^{n-1}$.
It can be readily verified that
$2^n-n-1> 2^{n-1}$ for all $n>3$, which proves
the claim.

If $\lambda =1$ then we have to show that $2^n-n-1$
does not divide $2^{n-2} (n-1)$.
If $2^r = \gcd(2^n-n-1,2^{n-2})$ then ${0\leq r\leq \log_2(n+1)}$.
Hence, we have to show that
$2^{n-r}-\frac{n+1}{2^r}$ does not divide $n-1$.
We will show that for all $n\geq 7$,
$2^{n-r}-\frac{n+1}{2^r}>n-1$.
It is easy to verify that
$$
2^{n-r}-\frac{n+1}{2^r}\geq 2^{n-\log_2(n+1)}-(n+1)=\frac{2^n}{n+1}-n-1~.
$$
Therefore, it is sufficient to show that
$$
\frac{2^n}{n+1}-n-1>n-1~,
$$
or equivalently
$$
2^n>2n (n+1).
$$
This is simply proved by induction on $n$ for all $n\geq 7$.

To complete the proof we should only verify that
for $n=4$, 5, and 6, we have that
$2^n-n-1$ does not divide $2^{n-2} (n-1)$.
\end{proof}

\begin{theorem}
\label{thm:NoPerellgeq2}
There are no perfect lattice codes in ${\cA(n,n-2,\ell)}$ if
$n \geq 4$ and $\ell \geq 2$.
\end{theorem}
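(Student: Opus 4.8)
The plan is to derive a contradiction from Lemma~\ref{lem:perfect_cond}. Assume a perfect lattice code $\cL\in\cA(n,n-2,\ell)$ exists and put $N:=|\cS(n,n-2,\ell)|=(\ell+1)^n-\ell^n-n\ell^{n-1}=(\ell+1)^n-\ell^{n-1}(\ell+n)$. By Lemma~\ref{lem:perfect_cond}, $N$ divides $(\ell+1)^{n-2}c$, where $c:=\ell+1+\lambda(n-2-\ell)$ for some integer $\lambda$ with $0\le\lambda\le\ell$. Since $N=\sum_{i=0}^{n-2}\binom{n}{i}\ell^i\ge\binom{n}{2}\ell^{n-2}$, one has $N>\ell(\ell+n)$ whenever $n\ge 4$ and $\ell\ge 2$; this elementary bound will be used several times.

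The first step is to remove the factor $(\ell+1)^{n-2}$. Reducing modulo $\ell$ gives $N\equiv 1\pmod{\ell}$, so $\gcd(N,\ell)=1$; reducing modulo $\ell+1$ and using $\ell\equiv-1$ gives $N\equiv(-1)^n(n-1)\pmod{\ell+1}$, so every prime dividing $\gcd(N,\ell+1)$ divides $n-1$, hence also $\ell+n=(\ell+1)+(n-1)$, and for such a prime $v_p(N)=v_p(\ell+n)$ (the term $(\ell+1)^n$ having much higher $p$-adic value). Therefore the largest divisor $g$ of $N$ whose prime factors all divide $\ell+1$ divides $\gcd(N,\ell+n)=\gcd\bigl((\ell+1)^n,\ell+n\bigr)$, whose radical divides $d:=\gcd(\ell+1,n-1)$; in particular $g\le\ell+n$. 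Writing $N=gN'$ with $\gcd(N',\ell+1)=1$, the divisibility $N\mid(\ell+1)^{n-2}c$ forces $N'\mid c$, and then any common prime factor of $N'$ and $n-2-\ell$ would divide $c-\lambda(n-2-\ell)=\ell+1$, so $\gcd(N',n-2-\ell)=1$ as well.

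Now for the endgame. As $\lambda$ runs over $\{0,1,\dots,\ell\}$ the values $c=c_\lambda$ form an arithmetic progression of length $\ell+1$ with common difference $n-2-\ell$; since $\gcd(N',n-2-\ell)=1$ and $N'\ge N/(\ell+n)>\ell$, at most one index $\lambda$ in this range can satisfy $N'\mid c_\lambda$, namely the unique solution $\lambda_0$ of $\lambda(n-2-\ell)\equiv-(\ell+1)\pmod{N'}$ reduced modulo $N'$. I would then argue: either $\lambda_0\bmod N'\notin\{0,\dots,\ell\}$, and we are done; or $\lambda_0\bmod N'\in\{0,\dots,\ell\}$, in which case $c_{\lambda_0}$ is a nonzero multiple of $N'$ (the zero case is treated below), whence $\ell+1+\ell\,|n-2-\ell|\ge|c_{\lambda_0}|\ge N'\ge N/(\ell+n)\ge\binom{n}{2}\ell^{n-2}/(\ell+n)$, which fails because the left side is at most quadratic in $\ell$ while the right side is not. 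Making this comparison uniform requires a short case split on the size of $d=\gcd(\ell+1,n-1)$ — the only thing that can make $g$, hence the gap between $N'$ and $N$, large — together with a direct check of the finitely many small pairs $(n,\ell)$ that the generic estimate misses, exactly as at the end of the proof of Theorem~\ref{thm:NoPerelleq1}.

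Two boundary situations fall outside this scheme. If $\mathbf{1}$ is covered by the codeword $\mathbf{1}$ itself (the case $\lambda=0$), then Lemma~\ref{lem:cover2} puts $\mathbf{1}-(\ell+1)\mathbf{e}_i$ in $\cL$ for every $i$, so $(\ell+1)\Z^n+\Z\mathbf{1}\subseteq\cL$ and $N\mid(\ell+1)^{n-1}$; then $N$ divides $(\ell+1)^n$, hence $(\ell+1)^n-N=\ell^{n-1}(\ell+n)$, hence (as $\gcd(N,\ell)=1$) $\ell+n$, contradicting $N>\ell+n$. If $c=0$ — which forces $\ell\ge n-1$ and $(\ell-n+2)\mid(\ell+1)$ — then the $n$ codewords used in Lemma~\ref{lem:perfect_cond} are linearly dependent and that lemma is vacuous; here I would adjoin one more forced codeword, e.g.\ the coverer of $\mathbf{1}+\mathbf{e}_n$, whose possible shapes are pinned down by Lemmas~\ref{lem:pacSn-2} and~\ref{lem:cover2} by the same method as in Lemmas~\ref{lem:cover1} and~\ref{lem:cover2}, thereby producing $n$ lattice vectors of nonzero determinant and rerunning the divisibility-and-size analysis on that determinant. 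The main obstacle I anticipate is precisely this endgame: controlling $g$ (equivalently the common divisor of $\ell+1$ and $n-1$) tightly enough across every divisibility pattern, and cleanly disposing of the genuinely occurring degenerate case $c=0$.
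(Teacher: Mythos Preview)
Your approach is a number-theoretic attack through Lemma~\ref{lem:perfect_cond}, extending the method used for $\ell=1$ in Theorem~\ref{thm:NoPerelleq1}. The paper does something entirely different and much shorter: having the codewords $X_i=\mathbf{1}-(\ell+1)\mathbf{e}_i$ for $1\le i\le n-1$ (extracted in the proof of Lemma~\ref{lem:perfect_cond}), it simply looks at the lattice point $Y=X_1+X_2=2\cdot\mathbf{1}-(\ell+1)\mathbf{e}_1-(\ell+1)\mathbf{e}_2$. For $\ell\ge2$ every coordinate of $Y$ satisfies $|y_i|\le\ell$, while $N_+(Y)=n-2$ and $N_-(Y)=2$, contradicting Lemma~\ref{lem:pacSn-2} directly. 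No divisibility analysis is needed; the hypothesis $\ell\ge2$ is exactly what makes the coordinates $y_1=y_2=1-\ell$ negative and the coordinates $y_3=\cdots=y_n=2$ bounded by $\ell$. This is why the paper splits off $\ell=1$ as the hard case.

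Your proposal, by contrast, has genuine gaps that you yourself flag. The claim $v_p(N)=v_p(\ell+n)$ for primes $p\mid\gcd(N,\ell+1)$ is not correct in general: when $v_p(\ell+1)=v_p(n-1)$ there can be cancellation in $(\ell+1)+(n-1)$ pushing $v_p(\ell+n)$ up to $n\,v_p(\ell+1)$ or beyond, and then $v_p\bigl((\ell+1)^n\bigr)$ and $v_p\bigl(\ell^{n-1}(\ell+n)\bigr)$ coincide, so $v_p(N)$ is not pinned down (for instance $p=2$, $n=5$, $\ell=1019$ gives $v_2(\ell+1)=2$, $v_2(\ell+n)=10=n\cdot v_2(\ell+1)$, and one checks $v_2(N)\ge11$). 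This breaks the bound $g\le\ell+n$, and with it the lower bound on $N'$ that drives your size comparison. The degenerate case $c=0$ is also only sketched. These are not cosmetic issues; they are the heart of the argument, and the paper's two-line geometric shortcut bypasses them entirely.
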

\begin{proof}
Let $n \geq 4$ and $\ell \geq 2$ and assume to the contrary,
that there exists a perfect lattice code $\cL\in \cA(n,n-2,\ell)$.
Without loss of generality, we can assume by
Lemma~\ref{lem:cover1} that the word $\mathbf{1} \in \Z^n$ is covered
by a codeword $X=\mathbf{1}- \lambda \mathbf{e}_n$,
where $\lambda$ is an integer, $0\leq \lambda \leq \ell$.
From the proof of Lemma~\ref{lem:perfect_cond} we have
that for all $i$, $1\leq i\leq n-1$,
the word $W_i=\mathbf{1}-\mathbf{e}_i$ is covered by the codeword $X_i=\mathbf{1}-(\ell+1)\cdot \mathbf{e}_i$.
Therefore, $Y=(y_1,y_2,\ldots,y_n)=X_1+X_2=2\cdot\mathbf{1}-
(\ell+1)\cdot \mathbf{e}_1-(\ell+1)\cdot\mathbf{e}_2$ is a codeword
Clearly, $y_1=y_2=2-(\ell+1)=1-\ell$ and since $\ell\geq 2$
it follows that for all $i$, $1\leq i\leq n$, $|y_i|\leq \ell$.
Moreover, $N_-(X)=2\leq n-2$ and $N_+(X)=n-2$, which
contradicts Lemma \ref{lem:pacSn-2}.
Thus, if $n \geq 4$ and $\ell \geq 2$, then
there are no perfect lattice codes in $\cA(n,n-2,\ell)$.
\end{proof}

Combining Theorems~\ref{thm:NoPerelleq1}
and~\ref{thm:NoPerellgeq2} we obtain the main result
of this section.
\begin{cor}
There are no perfect lattice codes in ${\cA(n,n-2,\ell)}$ if
$n \geq 4$ for any limited magnitude $\ell \geq 1$.
\end{cor}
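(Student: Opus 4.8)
The plan is to obtain the statement by a trivial case split on the limited magnitude $\ell$, using the two theorems just proved. For $\ell = 1$, Theorem~\ref{thm:NoPerelleq1} already states that $\cA(n,n-2,1)$ contains no perfect lattice code for every $n \geq 4$. For $\ell \geq 2$, Theorem~\ref{thm:NoPerellgeq2} states that $\cA(n,n-2,\ell)$ contains no perfect lattice code for every $n \geq 4$. Since every integer $\ell \geq 1$ is either equal to $1$ or at least $2$, these two facts together yield the nonexistence of a perfect lattice code in $\cA(n,n-2,\ell)$ for all $n \geq 4$ and all $\ell \geq 1$, which is exactly the claim. So the proof itself is a single line invoking the two theorems.

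For completeness I recall where the substance lies, in case a self-contained argument is wanted. In both ranges the set-up is identical: by Lemma~\ref{lem:cover1} one may assume, after permuting coordinates, that the all-one word $\mathbf{1}$ is covered by a codeword $X = \mathbf{1} - \lambda \mathbf{e}_n$ with $0 \le \lambda \le \ell$; Lemma~\ref{lem:cover2} then forces each $W_i = \mathbf{1} - \mathbf{e}_i$, $1 \le i \le n-1$, to be covered by $\mathbf{1} - (\ell+1)\mathbf{e}_i$, since a codeword $\mathbf{1} - \alpha \mathbf{e}_i$ with $1 \le \alpha \le \ell$ would re-cover $\mathbf{1}$. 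When $\ell \ge 2$ one forms the codeword $Y = (\mathbf{1} - (\ell+1)\mathbf{e}_1) + (\mathbf{1} - (\ell+1)\mathbf{e}_2) = 2\cdot\mathbf{1} - (\ell+1)\mathbf{e}_1 - (\ell+1)\mathbf{e}_2$; every entry of $Y$ has absolute value at most $\ell$ (this is exactly where $\ell \ge 2$ enters), yet $N_+(Y) = n-2$ and $N_-(Y) = 2$ are both below $n-1 = (n-2)+1$, contradicting Lemma~\ref{lem:pacSn-2}. When $\ell = 1$ this shortcut breaks, because the entries of $Y$ equal to $2$ now exceed $\ell$; instead one passes to the sublattice generated by the $n$ codewords $X, \mathbf{1} - (\ell+1)\mathbf{e}_1, \ldots, \mathbf{1} - (\ell+1)\mathbf{e}_{n-1}$, whose volume is a multiple of $|\cS(n,n-2,1)| = 2^n - n - 1$, and invokes the divisibility condition of Lemma~\ref{lem:perfect_cond}.

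Accordingly, the only genuine obstacle sits inside Theorem~\ref{thm:NoPerelleq1}: showing that $2^n - n - 1$ divides neither $2^{n-1}$ (the $\lambda = 0$ case) nor $2^{n-2}(n-1)$ (the $\lambda = 1$ case). The first is the elementary inequality $2^n - n - 1 > 2^{n-1}$ for $n > 3$; the second reduces, after dividing out the common power of $2$, to bounding the resulting quotient below by $2^n/(n+1) - n - 1$ and verifying $2^n > 2n(n+1)$ by induction for $n \ge 7$, together with a finite check at $n = 4, 5, 6$. Once those estimates are in hand, the present corollary follows by merely concatenating the two cases, as described above.
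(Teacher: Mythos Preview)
Your proposal is correct and matches the paper's approach exactly: the paper simply states that the corollary follows by combining Theorems~\ref{thm:NoPerelleq1} and~\ref{thm:NoPerellgeq2}, which is precisely your case split on $\ell=1$ versus $\ell\ge 2$. The additional recap you give of the substance of those two theorems is accurate and faithful to the paper's arguments, though strictly unnecessary for the corollary itself.
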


The existence of perfect lattice codes in $\cA(n,n-1,\ell)$
and their nonexistence in $\cA(n,n-2,\ell)$ might give an evidence
that such perfect codes won't exists in $\cA(n,n-\epsilon,\ell)$
for $\ell \geq 1$ and some $\epsilon > 1$. It would be interesting
to prove such a claim for $n \geq 4$ and
$2 \leq \epsilon \leq \lfloor \frac{n}{2} \rfloor$.

\section{Application to Write-Once Memories}
\label{sec:application}

A second possible application for a tiling of $\Z^n$ with an $n$-dimensional
chair is in constructions of multiple writing
in $n$ cells write-once memories. Each cell has $q$ charge
levels $\{ 0,1, \ldots , q -1 \}$. A letter from
an alphabet of size $\sigma$, $\Sigma = \{ 0,1\ldots,\sigma -1 \}$,
is written into the $n$ cells
as many times as possible. In each round the charge level
in each cell is greater than or equal to the charge level in the previous
round. It is desired that the number of rounds
for which we can guarantee to write a new symbol from $\Sigma$ will
be maximized.

An optimal solution for the problem can be described as follows.
Let $A$ be an $q \times q \times \cdots \times q$ $n$-dimensional array.
Let $\psi : A \rightarrow \Sigma$ be a coloring of the array $A$ with
the $\sigma$ alphabet letters. The rounds of writing and raising
the charge levels of the $n$ cells can be described in terms of
the coloring $\psi$ of the array $A$. If in the first round the symbol $s_1$
is written and the charge level in
cell $i$ is raised to $c_i^1$, $1 \leq i \leq n$, then
the color in position $(c_1^1,c_2^1,\ldots,c_n^1)$ is $s_1$. Therefore,
we have to find a coloring function $\psi$ such that the number of rounds
in which a new symbol can be written will be maximal.

Cassuto and Yaakobi~\cite{CaYa12} have found that using a coloring~$\psi$
based on a lattice tiling $\Lambda$ with a two-dimensional
chair provides the best known writing strategy
when there are two cells. A coloring $\tilde{\psi}$
of $\Z^n$ based on a lattice tiling $\Lambda$
with a shape~$\cS$ has $|\cS|$ colors. The lattice have $|\cS|$ cosets,
and hence~$|\cS|$ coset representatives, $X_0,X_1,\ldots,X_{|\cS|-1}$.
The points in $\Z^n$ of the coset $X_i + \Lambda$ are colored with the
$i$-th letter of $\Sigma$. Now, the coloring of
entry $(x_1,x_2,\ldots,x_n)$ of $A$ given by~$\psi$
is equal to the color of the point
$(x_1,x_2,\ldots,x_n) \in \Z^n$ given by the coloring $\tilde{\psi}$.
The method given in~\cite{CaYa12} suggests that a generalization
using coloring based on tiling of $\Z^n$ with an $n\text{-}$dimensional chair
will be a good strategy for WOM codes with $n$ cells~\cite{Yaak12}.
The analysis with two cells, i.e. two-dimensional tiling was discussed
with more details in~\cite{CaYa12}. The analysis for the
$n$-dimensional case will
be discussed in research work which follows by the
same authors and another group as well~\cite{Yaak12}.

\section{Conclusion}
\label{sec:conclude}

We have presented a few constructions for tilings
with $n\text{-}$dimensional chairs. The tilings are based either
on lattices or on generalized splitting. Both methods
are equivalent if our space is $\Z^n$. The generalized splitting
is a simple generalization for known concepts such as splitting
and $B_h[\ell]$ sequences. We have shown that
our tilings can be applied in the design of codes which
correct asymmetric errors with limited-magnitude.
We further mentioned a possible application
in the design of WOM codes for multiple writing.
Finally, we proved that some perfect codes for
correction of asymmetric errors with limited-magnitude
cannot exist.

\vspace{0.5cm}

\begin{center}
{\bf Acknowledgement}
\end{center}

The authors would like to thank Mihalis Kolountzakis and James Schmerl for
pointing on references~\cite{Kol98,Sch94,Ste90},
which consider lattice tiling of notched cubes.

%
%

\newpage
{\bf Sarit Buzaglo} was born in Israel in 1983. She received
the B.A. and M.Sc. degrees from the Technion - Israel Institute of
Technology, Haifa, Israel, in 2007 and 2010, respectively, from the
department of Mathematics.
She is currently a Ph.D. student in the
Computer Science Department at the Technion. Her research
interests include algebraic error-correction coding, coding
theory, discrete geometry, and combinatorics.

\vspace{1cm}

{\bf Tuvi Etzion} (M'89-SM'94-F'04) was born in Tel Aviv, Israel,
in 1956. He received the B.A., M.Sc., and D.Sc. degrees from the
Technion - Israel Institute of Technology, Haifa, Israel, in 1980,
1982, and 1984, respectively.

From 1984 he held a position in the department of Computer Science
at the Technion, where he has a Professor position. During the
years 1986-1987 he was Visiting Research Professor with the
Department of Electrical Engineering - Systems at the University
of Southern California, Los Angeles. During the summers of 1990
and 1991 he was visiting Bellcore in Morristown, New Jersey.
During the years 1994-1996 he was a Visiting Research Fellow in
the Computer Science Department at Royal Holloway College, Egham,
England. He also had several visits to the Coordinated Science
Laboratory at University of Illinois in Urbana-Champaign during
the years 1995-1998, two visits to HP Bristol during the summers
of 1996, 2000,  a few visits to the department of Electrical
Engineering, University of California at San Diego during the
years 2000-2012, and several visits to the Mathematics department
at Royal Holloway College, Egham, England, during the years
2007-2009.

His research interests include applications of discrete
mathematics to problems in computer science and information
theory, coding theory, and combinatorial designs.

Dr Etzion was an Associate Editor for Coding Theory for the IEEE
Transactions on Information Theory from 2006 till 2009.

\end{document}